\newcommand{\define}[1]{\textbf{#1}}
\newcommand{\Z}{\mathbb{Z}}
\newcommand{\R}{\mathbb{R}}
\newcommand{\To}{\Rightarrow}
\newcommand{\monicto}{\rightarrowtail}
\newcommand{\longdashv}{\mathrel{\relbar\mkern-4mu\relbar\mkern-9mu\vcenter{\hbox{$\dashv$}}}}
\newcommand{\Sgn}{\mathsf{Sgn}}
\DeclareMathOperator{\sgn}{sgn}
\DeclareMathOperator{\negate}{neg}
\DeclareMathOperator{\one}{one}
\DeclareMathOperator{\ev}{ev}
\DeclareMathOperator{\src}{src}
\DeclareMathOperator{\tgt}{tgt}
\DeclareMathOperator{\diag}{diag}
\newcommand{\cat}[1]{\mathsf{#1}}
\newcommand{\CAT}[1]{\mathsf{#1}}
\newcommand{\Sch}[1]{\mathsf{Sch}(#1)}
\newcommand{\Set}{\CAT{Set}}
\newcommand{\FinSet}{\CAT{FinSet}}
\newcommand{\Mod}[1]{\CAT{Mod}_{#1}}
\newcommand{\Vect}{\CAT{Vect}_{\R}}
\newcommand{\Poly}{\CAT{Poly}_{\R}}
\newcommand{\Con}{\CAT{Con}}
\newcommand{\Graph}{\CAT{Graph}}
\newcommand{\FinGraph}{\CAT{FinGraph}}
\newcommand{\Cat}{\CAT{Cat}}
\newcommand{\SgnGraph}{\CAT{SgnGraph}}
\newcommand{\FinSgnGraph}{\CAT{FinSgnGraph}}
\newcommand{\SgnCat}{\CAT{SgnCat}}
\newcommand{\PetriLink}{\CAT{LPetri}}
\newcommand{\SgnPetri}{\CAT{SgnPetri}}
\newcommand{\LinParaDynam}{\CAT{Para}(\Dynam)}
\newcommand{\ConParaDynam}{\CAT{Para}(\Dynam_+)}
\newcommand{\op}{\mathrm{op}}
\DeclareMathOperator{\id}{id}
\DeclareMathOperator{\Ob}{Ob}
\DeclareMathOperator{\Hom}{Hom}
\DeclareMathOperator{\Disc}{Disc}
\DeclareMathOperator{\Path}{Path}
\DeclareMathOperator{\Int}{Int}
\DeclareMathOperator{\Dynam}{Dynam}
\DeclareMathOperator{\LV}{LV}
\newcommand{\Open}[1]{\mathbb{O}\mathsf{pen}(#1)}
\newcommand{\Csp}{\mathbb{C}\mathsf{sp}}
\declaretheorem[within=section]{theorem}
\declaretheorem[sibling=theorem]{proposition}
\declaretheorem[sibling=theorem]{lemma}
\declaretheorem[sibling=theorem,style=definition]{definition}
\declaretheorem[sibling=theorem,style=remark]{remark}
\declaretheorem[sibling=theorem,style=remark]{example}
\tikzstyle{species}=[circle,draw=blue!75,fill=blue!10,minimum size=5mm]
\tikzstyle{transition}=[rectangle,draw=black!75,fill=black!10,
\tikzstyle{arc}=[->,>=stealth]
\tikzstyle{link}=[->,>=stealth,dashed]
\title{A compositional account of motifs, mechanisms, and dynamics in
  biochemical regulatory networks}
\date{}
\author{Rebekah Aduddell}
\affiliation{University of Texas at Arlington, Mathematics Department, Arlington, TX 76019, USA}
\email{rjaduddell@gmail.com}
\author{James P.\ Fairbanks}
\affiliation{University of Florida, Computer \& Information Science \& Engineering, Gainesville, FL 32611, USA}
\email{fairbanksj@ufl.edu}
\author{Amit Kumar}
\affiliation{Louisiana State University, Mathematics Department, Baton Rouge, LA 70803, USA}
\email{akuma25@lsu.edu}
\author{Pablo S.\ Ocal}
\affiliation{University of California, Los Angeles (UCLA), Mathematics Department, Los Angeles, CA 90095, USA}
\email{socal@math.ucla.edu}
\author{Evan Patterson}
\affiliation{Topos Institute, Berkeley, CA 94704, USA}
\email{evan@epatters.org}
\author{Brandon T.\ Shapiro}
\affiliation{University of Virginia, Mathematics Department, Charlottesville, VA 22904, USA}
\email{brandonshapiro@virginia.edu}
\begin{document}
\maketitle

\begin{abstract}
  Regulatory networks depict promoting or inhibiting interactions between
  molecules in a biochemical system. We introduce a category-theoretic formalism
  for regulatory networks, using signed graphs to model the networks and signed
  functors to describe occurrences of one network in another, especially
  occurrences of network motifs. With this foundation, we establish functorial
  mappings between regulatory networks and other mathematical models in
  biochemistry. We construct a functor from reaction networks, modeled as Petri
  nets with signed links, to regulatory networks, enabling us to precisely
  define when a reaction network could be a physical mechanism underlying a
  regulatory network. Turning to quantitative models, we associate a regulatory
  network with a Lotka-Volterra system of differential equations, defining a
  functor from the category of signed graphs to a category of parameterized
  dynamical systems. We extend this result from closed to open systems,
  demonstrating that Lotka-Volterra dynamics respects not only inclusions and
  collapsings of regulatory networks, but also the process of building up
  complex regulatory networks by gluing together simpler pieces. Formally, we
  use the theory of structured cospans to produce a lax double functor from the
  double category of open signed graphs to that of open parameterized dynamical
  systems. Throughout the paper, we ground the categorical formalism in examples
  inspired by systems biology.
\end{abstract}

\section{Introduction}

The genes, proteins, and RNA molecules that comprise living cells interact in
complex, varied ways to sustain the cell throughout its lifecycle and respond to
changes in its environment. Intensive experimental study of these interactions
is distilled in an idealized form as \emph{regulatory networks}, a kind of
directed graph in which vertices represent molecules and edges represent
interactions between molecules (\cref{fig:first-regnet}). The edges are labeled
with a positive or negative sign according to whether the interaction is
activating or inhibiting. Regulatory networks are the subject of a large body of
experimental and theoretical work, notably reviewed by Alon
\cite{alon2007,alon2019} and Tyson et al. \cite{tyson2010,tyson2019} among
others. Particular attention has been paid to \emph{network motifs}
\cite{alon2007,tyson2010}, the simple but functionally meaningful patterns that
recur frequently in regulatory networks, and to various quantitative
\emph{dynamics} \cite{tyson2019} that can be assigned to the networks.

\begin{figure}[h]
  \begin{equation*}
    \begin{tikzcd}[column sep=small]
      {} & {\mathrm{Ash1}} & {}\\
      {\mathrm{Cdk1/ClbS}} & {} & {\mathrm{Sld2}}
      \arrow[from=1-2, to=2-3]
      \arrow[maps to, no head, from=1-2, to=2-1]
      \arrow[curve={height=10pt}, from=2-1, to=2-3]
      \arrow[curve={height=10pt}, maps to, no head, from=2-3, to=2-1]
    \end{tikzcd}
  \end{equation*}
  \caption{A small biochemical regulatory network: regulation of Sld2 by Cdk1 or
    ClbS with Ash1 as a predicted transcription factor. Adapted from
    Csik{\'a}sz-Nagy et al. \cite[Figure 3C]{csikasz-nagy2009}.}
  \label{fig:first-regnet}
\end{figure}
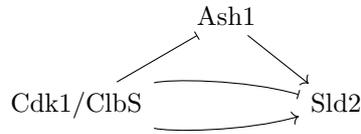

Although regulatory networks are simple enough to define mathematically---we
shall define them to be directed graphs, possibly with multiple edges and loops,
whose edges are assigned a positive or negative sign---important scientific
concepts involving them, such as occurrences of motifs in networks and
biochemical mechanisms generating networks, are often treated imprecisely.
Likewise for relationships between regulatory networks and other mathematical
models in biochemistry, particularly dynamical models based on ordinary or
stochastic differential equations. Hence a first aim of this paper is to put
certain concepts and relations concerning regulatory networks on a firm
mathematical footing. To do so, we will use methods from category theory.

Category theory, in both the small and the large, is a natural tool for this
study. In saying that a motif \emph{occurs} in a network, one should allow for
the possibility that the occurrence is indirect, involving a sequence of
appropriately signed interactions. For example, positive autoregulation can
occur directly but also indirectly through a double-negative feedback loop.
Since a small category is nothing other than a graph in which consecutive edges
can be composed, subject to certain laws, regulatory networks should be viewed
not only as signed graphs (\cref{sec:sgn-graph}) but also as signed categories
freely generated by those (\cref{sec:sgn-cat}). Sign-preserving functors, unlike
sign-preserving graph homomorphisms, can express indirect occurrences and are in
this sense a better notion of morphism for regulatory networks. Here we are
doing category theory \emph{in the small}, using categories as algebraic
structures comparable to familiar ones like graphs, groups, and monoids.

Having laid these foundations for regulatory networks, we turn to category
theory \emph{in the large}, a mathematical theory of structure well suited to
describe the passages between regulatory networks and other mathematical models
of biochemical systems. Formally speaking, these passages are functors into or
out of the category of regulatory networks. Making a functor is significantly
stronger than making an objects-only mapping, as is typically done in the
literature, since if morphisms of signed graphs formalize relationships between
different regulatory networks, then functorality requires that these
relationships be transported to or from other models of interest. By contrast,
an objects-only mapping is, abstractly speaking, entirely unconstrained and so
is capable of acting highly irregularly across different models of a given
class. Functorality thus serves as a kind of safeguard for model transformation:
it does not, on its own, ensure that a transformation makes good scientific
sense but it does impose nontrivial logical constraints and coherences.

A first illustration of this principle is the connection between regulatory
networks and biochemical reaction networks (\cref{sec:sgn-petri}). When modeling
the complex biochemical systems that constitute a living cell, it is often
practically necessary to abstract away certain details of the underlying
chemical processes. Regulatory networks generally do not capture all the species
or reactions involved in a given system; nor can they capture multispecies
reactions faithfully because they describe only pairwise interactions. Given
that regulatory networks are, to some degree, phenomenological models, it is
natural to ask whether a given network could arise as a summary of a specific
chemical process. The latter are described by \emph{biochemical reaction
  networks}, graph-like structures allowing reactions or transitions with
multiple inputs and outputs. Inspired by graphical syntax from systems biology
\cite{voit2000,voit2013}, we formalize reaction networks as ``Petri nets with
links,'' and we construct a functor from the category of Petri nets with signed
links to the category of signed graphs. This functor enables us to propose a
formal definition for when a reaction network could be a \emph{mechanism} for a
regulatory network, a concept that is rarely if ever treated in a precise way.

This concludes the content of \cref{sec:qualitative}. In
\cref{sec:quantitative}, we turn from qualitative to quantitative analysis,
seeking a functorial assignment of continuous dynamics to regulatory networks.
Although rarely made explicitly functorial, systematic ways to formulate a model
belonging to a mathematically homogeneous class of models are ubiquitous in
science. Voit calls these ``canonical representations'' or ``canonical
models,''\footnote{``Canonical'' models in systems biology should not be
  confused with the unrelated, in fact incompatible, notion of ``canonical''
  constructions in category theory.} and identifies Lotka-Volterra models and
BST models/S-systems as two prominent examples in biology \cite[\S 3]{voit2013}.
Reflecting their phenomenological status, regulatory networks do not admit a
single, obvious dynamical interpretation, and so a wide variety of dynamical
models have been considered, spanning the discrete and continuous, deterministic
and stochastic \cite{tyson2019}. We consider Lotka-Volterra systems of ordinary
differential equations. While not necessarily the most biologically plausible,
Lotka-Volterra systems are among the simplest possible continuous models and so
are a natural place to begin a functorial study.

A \emph{Lotka-Volterra system} of equations has the form
\begin{equation*}
  \dot x_i = \rho_i\, x_i + \sum_{j=1}^n \beta_{i,j}\, x_i\, x_j,
  \qquad i = 1,\dots,n.
\end{equation*}
or equivalently, has logarithmic derivatives that are affine functions of the
state variables:
\begin{equation*}
  \frac{d}{dt}[\log x_i(t)] = \rho_i + \sum_{j=1}^n \beta_{i,j}\, x_j(t),
  \qquad i = 1,\dots,n.
\end{equation*}
The coefficients $\rho_i$ specify baseline rates of growth or decay, according
to their sign, and the coefficients $\beta_{i,j}$ rates of activation or
inhibition, according to their sign. We construct a functor that sends a signed
graph (regulatory network) to a Lotka-Volterra model that constrains the signs
of the rate coefficients (\cref{sec:lv}). As a prerequisite, we define a
category of parameterized dynamical systems (\cref{sec:para-dynam}), a
construction of intrinsic interest that is by no means restricted to
Lotka-Volterra dynamics. By working with signed graphs, rather than merely
graphs, we ensure that scientific knowledge about whether interactions are
promoting or inhibiting is reflected in both the syntax and the quantitative
semantics.

In order to comprehend complex biological systems, we must decompose them into
small, readily understandable pieces and then compose them back together to
reproduce the behavior of the original system. This is the mantra of systems
biology, which stresses that compositionality is no less important than
reductionism in biology. With this motivation, a secondary aim of this paper is
to extend the above constructions from closed systems to open ones, which can be
composed together by gluing them along their interfaces. Mathematically, we pass
from categories to \emph{double categories}\footnote{Early work on categorical
  systems theory, including on structured cospans \cite{fiadeiro2007}, was based
  on \emph{bicategories}. For technical reasons explained in
  \cite{patterson2023}, it is increasingly common to use double categories
  instead, as in recent work on structured cospans \cite{baez2020}. This entails
  no loss since every double category has an underlying bicategory.}
\cite{grandis2019}, two-dimensional categorical structures in which the usual
morphisms of systems compose along one direction (by convention, the
``vertical'' one) and open systems compose along the other direction (the
``horizontal'' one). The double categories of open systems are further equipped
with \emph{monoidal products}, enabling systems to be composed not just in
sequence but also in parallel. Among other results, we show that the
Lotka-Volterra dynamics functor extends to a monoidal lax double functor from
the monoidal double category of open signed graphs to that of open parameterized
dynamical systems (\cref{sec:open-lv}).

The mathematics developed here is motivated by biochemistry but need not be
restricted to it. Famously, Lotka-Volterra systems originated in ecology to
model predator-prey dynamics \cite{lotka1925}. Regulatory networks and
Lotka-Volterra systems can be used as generic models of entities that
``regulate'' each other in some manner, be it at the scale of individual cells
or animal ecosystems. Regulatory networks are highly reminiscent of the
\emph{causal loop diagrams} in system dynamics \cite[Chapter 5]{sterman2000},
where the latter explicitly label feedback loops and their polarities.

The language of category theory is indispensable to this work but the level of
knowledge assumed of the reader is not constant. We assume throughout that the
reader is familiar with the basic notions of category theory, such as
categories, functors, and natural transformations. Our main reference for facts
about category theory is Riehl's text \cite{riehl2016}, although there are many
others. In the definitions and theorem statements, we have tried to minimize the
technical level and explicate the ideas in concrete terms. In the proofs, we
have aimed for efficiency and freely use concepts and results from the
literature that do not appear in the main text. The reader can omit the proofs
without disrupting the continuity of the paper.

\paragraph{Acknowledgments.}

The authors thank the American Mathematical Society (AMS) for hosting the 2022
Mathematical Research Community (MRC) on Applied Category Theory, where this
research project began. The AMS MRC was supported by NSF grant 1916439. We thank
John Baez, our group's mentor at the MRC, for suggesting this project and for
much helpful advice along the way. Authors Fairbanks, Patterson, and Shapiro
acknowledge subsequent support from the DARPA ASKEM and Young Faculty Award
programs through grants HR00112220038 and W911NF2110323. Author Ocal
acknowledges subsequent support from an AMS-Simons Travel Grant and from the
Hausdorff Research Institute for Mathematics funded by the German Research
Foundation (DFG) under Germany's Excellence Strategy - EXC-2047/1 - 390685813.

\section{Qualitative analysis: motifs and mechanisms}
\label{sec:qualitative}

\subsection{Regulatory networks as signed graphs}
\label{sec:sgn-graph}

To begin, we clarify the notion of graph to be used throughout in this paper.
The following definition is standard among category theorists. In other fields,
it might be called a ``directed multigraph,'' but we will call it simply a
``graph.''

\begin{definition}[Graphs] \label{def:graph}
  The \define{schema for graphs} is the category $\Sch{\Graph}$ freely generated
  by two parallel morphisms:
  \begin{equation*}
    \begin{tikzcd}
      V & E
      \arrow["{\mathrm{src}}"', shift right=1, from=1-2, to=1-1]
      \arrow["{\mathrm{tgt}}", shift left=1, from=1-2, to=1-1]
    \end{tikzcd}.
  \end{equation*}
  A \define{graph} is a functor $X: \Sch{\Graph} \to \Set$, also known as a
  \emph{copresheaf} on $\Sch{\Graph}$.\footnote{Applied category theorists often
    take set-valued functors to be covariant (i.e., as copresheaves) rather than
    contravariant (i.e., as the more traditional presheaves), for reasons of
    convenience visible in works such as \cite{spivak2021}.} A \define{graph
    homomorphism} from a graph $X$ to another graph $Y$ is a natural
  transformation $\phi: X \to Y$. Graphs and graph homomorphisms form the category
  $\Graph$.
\end{definition}

To restate the definition in explicit terms, a graph $X$ consists of
\begin{itemize}[noitemsep]
  \item a set $X(V)$ of \define{vertices};
  \item a set $X(E)$ of \define{edges}; and
  \item functions $X(\src), X(\tgt): X(E) \to X(V)$, assigning to each edge its
    \define{source} and \define{target}.
\end{itemize}
A graph homomorphism $\phi: X \to Y$ consists of a function
$\phi_V: X(V) \to Y(V)$, the \define{vertex map}, and another function
$\phi_E: X(E) \to Y(E)$, the \define{edge map}. These maps must preserve sources
and targets, meaning that the following squares commute:
\begin{equation*}
  \begin{tikzcd}
    {X(E)} & {X(V)} \\
    {Y(E)} & {Y(V)}
    \arrow["{X(\mathrm{src})}", from=1-1, to=1-2]
    \arrow["{\phi_E}"', from=1-1, to=2-1]
    \arrow["{Y(\mathrm{src})}"', from=2-1, to=2-2]
    \arrow["{\phi_V}", from=1-2, to=2-2]
  \end{tikzcd}
  \qquad\qquad
  \begin{tikzcd}
    {X(E)} & {X(V)} \\
    {Y(E)} & {Y(V)}
    \arrow["{X(\mathrm{tgt})}", from=1-1, to=1-2]
    \arrow["{\phi_E}"', from=1-1, to=2-1]
    \arrow["{Y(\mathrm{tgt})}"', from=2-1, to=2-2]
    \arrow["{\phi_V}", from=1-2, to=2-2]
  \end{tikzcd}.
\end{equation*}

We now turn to the main notion of this section, signed graph. Write $\Sgn$ for
the set of (nonzero) signs, whose two elements may be denoted $\{1,-1\}$ or
$\{+,-\}$. The set of signs is an abelian group, isomorphic to the cyclic group
$\Z_2$, under the usual multiplication.

\begin{definition}[Signed graphs] \label{def:sgn-graph}
  The \define{category of signed graphs} is the slice category
  \begin{equation*}
    \SgnGraph \coloneqq \Graph/\Sgn,
  \end{equation*}
  where, by abuse of notation, $\Sgn$ is regarded as a graph with one vertex and
  two loops.
\end{definition}

Unpacking the definition, a \define{signed graph} is seen to be a graph $X$
equipped with a function $X(\sgn): X(E) \to \Sgn$ that assigns a sign to each
edge. Given signed graphs $X$ and $Y$, a \define{morphism of signed graphs} from
$X$ to $Y$ is a graph homomorphism $\phi$ that preserves signs, meaning that the
following triangle commutes:
\begin{equation*}
  \begin{tikzcd}[column sep=small]
    {X(E)} && {Y(E)} \\
    & {\mathsf{Sgn}}
    \arrow["{X(\mathrm{sgn})}"', from=1-1, to=2-2]
    \arrow["{Y(\mathrm{sgn})}", from=1-3, to=2-2]
    \arrow["{\phi_E}", from=1-1, to=1-3]
  \end{tikzcd}.
\end{equation*}

Signed graphs are a mathematical description of the regulatory networks studied
in systems biology \cite{alon2007,tyson2010}. For the purposes of this paper, we
will simply define a \define{regulatory network} to be a signed graph. The
vertices of the graph represent the components of the network, which could be
proteins, genes, or RNA molecules. Signed edges represent interactions between
components, where the source has the effect of either
\emph{activating}/\emph{promoting} the target (positive sign) or
\emph{inhibiting}/\emph{repressing} it (negative sign). As is customary, we
denote activation interactions by arrows with pointed heads ($\longrightarrow$)
and inhibition interactions by arrows with flat heads ($\longdashv$). For
instance, the two drawings
\begin{equation*}
  \begin{tikzcd}
    x & y
    \arrow["{+}", curve={height=-12pt}, from=1-1, to=1-2]
    \arrow["{-}", curve={height=-12pt}, from=1-2, to=1-1]
  \end{tikzcd}
  \qquad\leftrightsquigarrow\qquad
  \begin{tikzcd}
    x & y
    \arrow[curve={height=-12pt}, from=1-1, to=1-2]
    \arrow[curve={height=12pt}, maps to, no head, from=1-1, to=1-2]
  \end{tikzcd}
\end{equation*}
represent the same network, a negative feedback loop in which $x$ activates $y$,
which in turn inhibits $x$ \cite[Scheme 1, Motif B]{tyson2010}.

In the literature \cite{tyson2010}, regulatory networks are often represented
mathematically as sign-valued matrices. This approach is a special case of ours
in that an $n$-by-$n$ matrix valued in $\{+1,-1,0\}$ can be interpreted as a
simple signed graph on $n$ vertices, with signed edges defined by the nonzero
matrix elements. Unlike the matricial formalism, our formalism allows multiple
edges between the same pair of edges, which can model multiple interactions
based on different mechanisms. Allowing multiple edges and self-loops also
ensures that graphs and signed graphs form well behaved categories, as the
following proposition shows.

\begin{proposition}
  The category of signed graphs is complete (has all limits) and cocomplete (has
  all colimits).
\end{proposition}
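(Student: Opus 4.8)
The plan is to reduce the claim to standard facts about functor categories and slice categories, rather than verifying limits and colimits by hand.

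First I would recall that for any small category $\cat{C}$, the copresheaf category $[\cat{C}, \Set]$ is both complete and cocomplete, with limits and colimits computed pointwise; this is a standard result (see, e.g., Riehl \cite{riehl2016}). Applying this with $\cat{C} = \Sch{\Graph}$ shows immediately that $\Graph$ is complete and cocomplete. This disposes of the unsigned case with essentially no work.

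Next I would handle the passage to the slice category $\SgnGraph = \Graph/\Sgn$. Here I would invoke two standard facts about slice categories over a cocomplete or complete base. For colimits: a slice category $\cat{E}/A$ of a cocomplete category $\cat{E}$ is always cocomplete, since the forgetful functor $\cat{E}/A \to \cat{E}$ creates colimits (a colimit of objects over $A$ is computed in $\cat{E}$, with the structure map to $A$ induced by the universal property). For limits: the slice $\cat{E}/A$ is complete whenever $\cat{E}$ is complete, because $\cat{E}/A$ can be computed as a pullback/comma category and the forgetful functor creates connected limits, while the terminal object of $\cat{E}/A$ is $\id_A$; more directly, one checks that a limit in $\cat{E}/A$ of a diagram is the limit in $\cat{E}$ of the augmented diagram that includes $A$. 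I would cite these as the two structural lemmas and note that $\Graph$ plays the role of $\cat{E}$ and the sign graph $\Sgn$ the role of $A$.

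I do not expect a genuine obstacle here, since everything follows from off-the-shelf results; the only point requiring a little care is the distinction between limits and colimits in the slice. Colimits in a slice are entirely inherited from the base, so that direction is immediate. Limits are slightly more subtle because the terminal object and general (non-connected) limits must incorporate the map to $\Sgn$; the cleanest formulation is that $\Graph/\Sgn$ has a terminal object $\id_{\Sgn}$ and all pullbacks (which it inherits from $\Graph$), and a category with a terminal object and all pullbacks has all finite limits, while arbitrary products over $\Sgn$ are obtained as limits in $\Graph$ of the diagram together with its maps to $\Sgn$. Assembling these observations yields completeness and cocomplete of $\SgnGraph$, completing the proof.
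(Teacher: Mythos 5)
Your proposal is correct and follows essentially the same route as the paper: first observe that $\Graph$ is a copresheaf category and hence complete and cocomplete, then transfer both properties to the slice $\SgnGraph = \Graph/\Sgn$ (the paper simply cites the standard slice-category result that you unpack by hand, and also notes the alternative that slices of copresheaf categories are again copresheaf categories). Your explicit treatment of limits in the slice via the augmented diagram over $\Sgn$ is a correct elaboration of the cited lemma, not a different argument.
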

\begin{proof}
  Because $\Graph$ is a copresheaf category, it is complete and cocomplete
  \cite[Proposition 3.3.9]{riehl2016}. The slice category
  $\SgnGraph = \Graph/\Sgn$ is hence also complete and cocomplete
  \cite[Proposition 3.5.5]{riehl2016}; alternatively, this follows because
  slices of copresheaf categories are again (equivalent to) copresheaf
  categories \cite[Remark p.\ 303]{street2000}.
\end{proof}

A morphism of signed graphs can do two things. Most obviously, it can pick out a
signed graph as a subobject of another one, via a sign-preserving subgraph
embedding. A signed graph morphism can also collapse multiple vertices onto a
single vertex, and multiple edges onto a single edge with the same sign, in the
restrictive sense permitted by a graph homomorphism. To illustrate, consider the
following morphism inspired by Alon's review \cite[Figure 5]{alon2007}.
\begin{equation} \label{eq:arginine-morphism}
  \begin{tikzcd}[column sep=small]
    && {\mathrm{argR}} \arrow[loop above, maps to, no head] \\
    {\mathrm{argCBH}} & {\mathrm{argD}} & {\mathrm{argE}} & {\mathrm{argF}} & {\mathrm{argI}}
    \arrow[maps to, no head, from=2-1, to=1-3]
    \arrow[maps to, no head, from=2-2, to=1-3]
    \arrow[maps to, no head, from=2-3, to=1-3]
    \arrow[maps to, no head, from=2-4, to=1-3]
    \arrow[maps to, no head, from=2-5, to=1-3]
  \end{tikzcd}
  \qquad\longrightarrow\qquad
  \begin{tikzcd}
    {\mathrm{argR}} \arrow[loop above, maps to, no head] \\
    {\mathrm{arg*}}
    \arrow[maps to, no head, from=2-1, to=1-1]
  \end{tikzcd}
\end{equation}
The network in the domain is a ``single-input module'' in the arginine
biosynthesis system, in which the regulator argR represses five different
enzymes (argCHB, argD, etc.)\ involved in producing arginine. The morphism above
forgets the distinction between these enzymes, collapsing them into a catch-all
entity labeled ``$\mathrm{arg*}$''. These two functions---embedding and
collapsing---are \emph{all} that a signed graph morphism can do. More precisely,
any morphism of signed graphs factors essentially uniquely as an epimorphism
(morphism with surjective vertex and edge maps) followed by a monomorphism
(morphism with injective vertex and edge maps), using the epi-mono factorization
available in any copresheaf category, or more generally in any topos \cite[\S
IV.6]{maclane1994}. In the next section we will consider a more flexible notion
of morphism between signed graphs.

Colimits of signed graphs can be used to construct a category, or rather a
double category, of \emph{open} signed graphs. Composition of open signed graphs
formalizes the process of building large regulatory networks from smaller
pieces, including network motifs.

\begin{proposition}[Open signed graphs] \label{prop:open-sgn-graphs}
  There is a symmetric monoidal double category of open signed graphs,
  $\Open{\SgnGraph}$, having
  \begin{itemize}[noitemsep]
    \item as objects, sets $A, B, C, \dots$;
    \item as vertical morphisms, functions $f: A \to B$;
    \item as horizontal morphisms, \define{open signed graphs}, which consist of a
      signed graph $X$ together with a cospan of sets
      $A_0 \xrightarrow{\ell_0} X(V) \xleftarrow{\ell_1} A_1$;
    \item as cells, \define{morphisms of open signed graphs}
      $(X,\ell_0,\ell_1) \to (Y,m_0,m_1)$, which consist of a map of signed
      graphs $\phi: X \to Y$ along with functions $f_i: A_i \to B_i$, $i=0,1$,
      making the following diagram commute:
      \begin{equation*}
        \begin{tikzcd}
          {A_0} & {X(V)} & {A_1} \\
          {B_0} & {Y(V)} & {B_1}
          \arrow["{\ell_0}", from=1-1, to=1-2]
          \arrow["{f_0}"', from=1-1, to=2-1]
          \arrow["{\ell_1}"', from=1-3, to=1-2]
          \arrow["{\phi_V}"', from=1-2, to=2-2]
          \arrow["{m_0}"', from=2-1, to=2-2]
          \arrow["{f_1}", from=1-3, to=2-3]
          \arrow["{m_1}", from=2-3, to=2-2]
        \end{tikzcd}.
      \end{equation*}
  \end{itemize}
  Vertical composition is by composition in $\Set$ and in $\SgnGraph$.
  Horizontal composition and monoidal products are by pushouts and coproducts in
  $\SgnGraph$, respectively, viewing the sets in the feet of the cospans as
  discrete signed graphs.
\end{proposition}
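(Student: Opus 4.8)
The plan is to realize $\Open{\SgnGraph}$ as a double category of \emph{structured cospans} in the sense of Baez and Courser, whose general machinery \cite{baez2020} manufactures a symmetric monoidal double category from a single functor satisfying mild hypotheses. The relevant functor is the discrete signed graph functor $\Disc: \Set \to \SgnGraph$, sending a set $A$ to the signed graph with vertex set $A$, empty edge set, and the (unique, hence vacuous) empty sign assignment. The first observation I would record is that a morphism of signed graphs $\Disc(A) \to X$ is nothing but a function $A \to X(V)$, since $\Disc(A)$ has no edges and sign preservation is vacuous; this exhibits $\Disc$ as left adjoint to the vertex-set functor $U: \SgnGraph \to \Set$, $X \mapsto X(V)$. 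In particular $\Disc$ preserves all colimits.

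With this functor fixed, the $\Disc$-structured cospans are by definition the cospans $\Disc(A_0) \to X \leftarrow \Disc(A_1)$ in $\SgnGraph$, and by the adjunction $\Disc \dashv U$ these correspond bijectively, naturally in all the data, to the stated notion of open signed graph: a signed graph $X$ together with a cospan of sets $A_0 \to X(V) \leftarrow A_1$. In the same way a cell of the structured cospan double category is a commuting diagram of exactly the prescribed shape, matching the stated morphisms of open signed graphs. Thus the real content of the proposition is that the Baez--Courser construction applies and that its output unwinds to the concrete description given.

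To invoke the main theorem of \cite{baez2020} I would verify its hypotheses: the base category $\Set$ has finite coproducts; the structuring category $\SgnGraph$ has finite colimits, indeed all colimits, by the preceding proposition; and $\Disc$ preserves finite coproducts, which is automatic since it is a left adjoint. The theorem then produces the symmetric monoidal double category ${}_{\Disc}\Csp(\SgnGraph)$, whose horizontal composition is computed by pushout in $\SgnGraph$ (gluing two open signed graphs along their shared interface set, regarded as a discrete signed graph), whose monoidal product is computed by coproduct in $\SgnGraph$ (disjoint union), and whose symmetry is inherited from that of the coproduct. Because the outer feet of a composite cospan are unchanged and $\Disc$ carries coproducts of sets to coproducts of discrete signed graphs, all feet remain discrete on the expected sets, so the construction restricts to precisely the structure described in the statement.

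The only genuine work is bookkeeping: confirming that the abstract structured cospan data translates, under $\Disc \dashv U$, into the explicit vertices-and-interfaces description, and that the associativity, unit, monoidal, and symmetry coherence cells transport correctly through this translation. I do not expect any of this to present a real obstacle; the essential inputs are that $\SgnGraph$ is cocomplete and that $\Disc$ is a left adjoint, both established immediately above, after which the symmetric monoidal double category is delivered by the cited theorem.
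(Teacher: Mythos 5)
Your proposal is correct and takes essentially the same route as the paper: both realize $\Open{\SgnGraph}$ as the double category of $\Disc$-structured cospans via \cite[Theorems 2.3 and 3.9]{baez2020}, using the adjunction $\Disc \dashv \ev_V$ to translate cospans $\Disc(A_0) \to X \leftarrow \Disc(A_1)$ in $\SgnGraph$ into the stated cospans of sets $A_0 \to X(V) \leftarrow A_1$ and likewise for cells. The paper additionally records that this translation is natural (so that commutativity of the structured-cospan cell diagram is equivalent to that of the decorated-cospan diagram), a point you gesture at as ``bookkeeping'' and which the paper reuses in later constructions.
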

\begin{proof}
  To construct this symmetric monoidal double category, we use the method of
  structured cospans \cite{fiadeiro2007} in its double-categorical form
  \cite{baez2020}. The categories of sets and of signed graphs are related by an
  adjoint pair of functors
  \begin{equation*}
    \begin{tikzcd}
      {\mathsf{Set}} & {\mathsf{SgnGraph}}
      \arrow[""{name=0, anchor=center, inner sep=0}, "{\mathrm{Disc}}", curve={height=-12pt}, from=1-1, to=1-2]
      \arrow[""{name=1, anchor=center, inner sep=0}, "{\mathrm{ev}_V}", curve={height=-12pt}, from=1-2, to=1-1]
      \arrow["\dashv"{anchor=center, rotate=-90}, draw=none, from=0, to=1]
    \end{tikzcd}.
  \end{equation*}
  Here $\ev_V: \SgnGraph \to \Set$ is the \define{evaluation at $V$} functor,
  sending a signed graph $X$ to its set of vertices $X(V)$ and a morphism of
  signed graphs $\phi$ to its vertex map $\phi_V$, and
  $\Disc: \Set \to \SgnGraph$ is the \define{discrete signed graph} functor,
  sending a set $A$ to the signed graph with vertex set $A$ and no edges. We
  obtain a symmetric monoidal double category of open signed graphs as the
  $L$-structured cospans for the functor $L \coloneqq \Disc: \Set \to \SgnGraph$
  \cite[Theorems 2.3 and 3.9]{baez2020}.

  To show that this symmetric monoidal double category is the same one in the
  proposition statement, suppose that $L \dashv R: \cat{A} \to \cat{X}$ is an
  adjoint pair of functors, where in our application $L = \Disc$ and
  $R = \ev_V$. By the defining bijection of an adjunction, $L$-structured
  cospans, i.e., objects $A_0$ and $A_1$ in $\cat{A}$ together with a cospan
  $L(A_0) \rightarrow X \leftarrow L(A_1)$ in $\cat{X}$, correspond exactly to
  ``$R$-decorated cospans,'' i.e., an object $X$ in $\cat{X}$ together with a
  cospan $A_0 \rightarrow R(X) \leftarrow A_1$ in $\cat{A}$. Furthermore, by the
  naturality of this bijection \cite[Lemma 4.1.3]{riehl2016}, morphisms of
  $L$-structured and $R$-decorated cospans
  \begin{equation*}
    \begin{tikzcd}[column sep=small]
      {L(A_0)} & X & {L(A_1)} \\
      {L(B_0)} & Y & {L(B_1)}
      \arrow[from=1-1, to=1-2]
      \arrow["{L(f_0)}"', from=1-1, to=2-1]
      \arrow[from=1-3, to=1-2]
      \arrow["\phi"', from=1-2, to=2-2]
      \arrow[from=2-1, to=2-2]
      \arrow["{L(f_1)}", from=1-3, to=2-3]
      \arrow[from=2-3, to=2-2]
    \end{tikzcd}
    \qquad\leftrightsquigarrow\qquad
    \begin{tikzcd}[column sep=small]
      {A_0} & {R(X)} & {A_1} \\
      {B_0} & {R(Y)} & {B_1}
      \arrow[from=1-1, to=1-2]
      \arrow["{f_0}"', from=1-1, to=2-1]
      \arrow[from=1-3, to=1-2]
      \arrow["{R(\phi)}"', from=1-2, to=2-2]
      \arrow[from=2-1, to=2-2]
      \arrow["{f_1}", from=1-3, to=2-3]
      \arrow[from=2-3, to=2-2]
    \end{tikzcd}
  \end{equation*}
  related by the adjunction are equivalent in that one diagram commutes if and
  only if the other does. We will tacitly reuse this reasoning in future
  constructions, such as \cref{prop:open-sgn-cats} below.
\end{proof}

Let us illustrate compositionality as a means of building larger regulatory
networks from smaller ones. The following example is adapted from Keurentjes et
al.\ \cite[Figure 1]{keurentjes2011}, later reproduced in \cite[Figure
1.7]{voit2018}.

\begin{example}[Stress response system in plants]
  When a plant perceives stress factors, it has three main biotic responses,
  called salicylic acid (SA) signaling, jasmonic acid (JA) signaling, and
  ethylene (ET) signaling. These processes promote the transcription factors
  WRKY, MYC2, and ERF, respectively, which in turn activate the genes
  responsible for responding to SA, JA, and ET. As it appears in \cite[Figure
  1]{keurentjes2011} and \cite[Figure 1.7]{voit2018}, the regulatory network
  governing these biotic responses
  \begin{equation} \label{eq:signaling-reg-nets}
    \begin{tikzcd}[
      column sep=small,
      /tikz/execute at end picture={
        \node (SAres) [rectangle, draw, dashed, fit=(SA) (NPR1) (WRKY) (SAgen)] {};
        \node (JAres) [rectangle, draw, dashed, fit=(JA) (JAZ) (MYC2) (JAgen)] {};
        \node (ETres) [rectangle, draw, dashed, fit=(ET) (ETR1) (EIN2) (ERF) (ETgen)] {};
      }
      ]
      |[alias=SA]| {\mathrm{SA}} & |[alias=JA]| {\mathrm{JA}} & |[alias=ET]| {\mathrm{ET}}\\
      {} & {} & |[alias=ETR1]| {\mathrm{ETR1}}\\
      |[alias=NPR1]| {\mathrm{NPR1}} & |[alias=JAZ]| {\mathrm{JAZ}} & |[alias=EIN2]| {\mathrm{EIN2}}\\
      |[alias=WRKY]| {\mathrm{WRKY}} & |[alias=MYC2]| {\mathrm{MYC2}} & |[alias=ERF]| {\mathrm{ERF}}\\
      |[alias=SAgen]| \text{SA-responsive genes} & |[alias=JAgen]| \text{JA-responsive genes} & |[alias=ETgen]| \text{ET-responsive genes}
      \arrow[from=1-1, to=3-1]
      \arrow[from=3-1, to=4-1]
      \arrow[from=4-1, to=5-1]
      \arrow[from=1-2, to=3-2]
      \arrow[maps to, no head, from=4-2, to=3-2]
      \arrow[from=4-2, to=5-2]
      \arrow[maps to, no head, from=4-1, to=4-2]
      \arrow[maps to, no head, from=4-3, to=4-2]
      \arrow[maps to, no head, from=2-3, to=1-3]
      \arrow[maps to, no head, from=3-3, to=2-3]
      \arrow[from=3-3, to=4-3]
      \arrow[from=4-3, to=5-3]
    \end{tikzcd}
  \end{equation}
  has five discernible subsystems: three signaling processes involving SA, JA,
  and ET, indicated by the dashed boxes, and two interactions \emph{between}
  these processes, namely the inhibitions of WRKY and ERF by MYC2 that
  constitute SA-JA and JA-ET interactions. These subsystems and their functions
  were identified empirically, and it is known that they interact in the
  prescribed manner.

  Following the decomposition identified by the biologists, we build up the
  overall system from smaller subsystems. Namely, we identify five subsystems
  and compose them from left to right. We could equally well have identified
  other subsystems or done the compositions in a different order, yielding an
  equivalent system. Horizontal associativity in the double category
  $\Open{\SgnGraph}$, constructed in \cref{prop:open-sgn-graphs}, ensures that
  any order of composition yields the same result, up to isomorphism.

  We choose to separate out the five subsystems into small, individual
  regulatory networks:
  \begin{equation} \label{eq:breakout-signaling}
    \begin{tikzcd}[column sep=small]
      {\mathrm{SA}}\\
      {\mathrm{NPR1}}\\
      {\mathrm{WRKY}}\\
      \text{SA-responsive genes}
      \arrow[from=1-1, to=2-1]
      \arrow[from=2-1, to=3-1]
      \arrow[from=3-1, to=4-1]
    \end{tikzcd}
    \hfill
    \begin{tikzcd}[column sep=small]
      {\mathrm{JA}}\\
      {\mathrm{JAZ}}\\
      {\mathrm{MYC2}}\\
      \text{JA-responsive genes}
      \arrow[from=1-1, to=2-1]
      \arrow[maps to, no head, from=3-1, to=2-1]
      \arrow[from=3-1, to=4-1]
    \end{tikzcd}
    \hfill
    \begin{tikzcd}[column sep=small]
      {\mathrm{ET}}\\
      {\mathrm{ETR1}}\\
      {\mathrm{EIN2}}\\
      {\mathrm{ERF}}\\
      \text{ET-responsive genes}
      \arrow[maps to, no head, from=2-1, to=1-1]
      \arrow[maps to, no head, from=3-1, to=2-1]
      \arrow[from=3-1, to=4-1]
      \arrow[from=4-1, to=5-1]
    \end{tikzcd}
    \hfill
    \begin{tikzcd}[column sep=small]
      {\mathrm{WRKY}}\\
      {\mathrm{MYC2}}
      \arrow[maps to, no head, from=1-1, to=2-1]
    \end{tikzcd}
    \hfill
    \begin{tikzcd}[column sep=small]
      {\mathrm{ERF}}\\
      {\mathrm{MYC2}}
      \arrow[maps to, no head, from=1-1, to=2-1]
    \end{tikzcd}
  \end{equation}
  To turn these into \emph{open} regulatory networks, as defined in
  \cref{prop:open-sgn-cats} above, we will regard the SA signaling subsystem as
  having no inputs and one output as follows.
  \begin{equation} \label{eq:open-SA-sig}
    \begin{tikzcd}[
      column sep=large, row sep=small,
      /tikz/execute at end picture={
        \node (IN) [rectangle, draw, dotted, fit=(in1) (in2) (in3) (in4)] {};
        \node (SAres) [rectangle, draw, dotted, fit=(SA) (NPR1) (WRKY) (SAgen)] {};
        \node (OUT) [rectangle, draw, dotted, fit=(out1) (out2) (out3) (out4)] {};
      }
      ]
      |[alias=in1]| {} & |[alias=SA]| {\mathrm{SA}} & |[alias=out1]| {}\\
      |[alias=in2]| {} & |[alias=NPR1]| {\mathrm{NPR1}} & |[alias=out2]| {}\\
    	|[alias=in3]| {} & |[alias=WRKY]| {\mathrm{WRKY}} & 	|[alias=out3]| {\bullet}\\
      |[alias=in4]| {} &  |[alias=SAgen]| \text{SA-responsive genes} & |[alias=out4]| {}
      \arrow[from=1-2, to=2-2]
      \arrow[from=2-2, to=3-2]
      \arrow[from=3-2, to=4-2]
      \arrow[from=3-3, to=3-2]
    \end{tikzcd}
  \end{equation}
  Similarly, ET signaling will admit one input and no outputs.
  \begin{equation} \label{eq:open-ET-sig}
    \begin{tikzcd}[
      column sep=large, row sep=small,
      /tikz/execute at end picture={
        \node (IN) [rectangle, draw, dotted, fit=(in1) (in2) (in3) (in4) (in5)] {};
        \node (ETres) [rectangle, draw, dotted, fit=(ET) (ETR1) (EIN2) (ERF) (ETgen)] {};
        \node (OUT) [rectangle, draw, dotted, fit=(out1) (out2) (out3) (out4) (out5)] {};
      }
      ]
    	|[alias=in1]| {} & |[alias=ET]| {\mathrm{ET}} & |[alias=out1]| {}\\
    	|[alias=in2]| {} & |[alias=ETR1]| {\mathrm{ETR1}} & |[alias=out2]| {}\\
    	|[alias=in3]| {} & |[alias=EIN2]| {\mathrm{EIN2}} & |[alias=out3]| {}\\
    	|[alias=in4]| {\bullet} & |[alias=ERF]| {\mathrm{ERF}} & 	|[alias=out4]| {}\\
    	|[alias=in5]| {} &  |[alias=ETgen]| \text{ET-responsive genes} & |[alias=out5]| {}
      \arrow[maps to, no head, from=2-2, to=1-2]
      \arrow[maps to, no head, from=3-2, to=2-2]
      \arrow[from=3-2, to=4-2]
      \arrow[from=4-2, to=5-2]
      \arrow[from=4-1, to=4-2]
    \end{tikzcd}
  \end{equation}
  JA signaling will admit one input and one output.
  \begin{equation} \label{eq:open-JA-sig}
    \begin{tikzcd}[
      column sep=large, row sep=small,
      /tikz/execute at end picture={
        \node (IN) [rectangle, draw, dotted, fit=(in1) (in2) (in3) (in4)] {};
        \node (SAres) [rectangle, draw, dotted, fit=(JA) (JAZ) (MYC2) (JAgen)] {};
        \node (OUT) [rectangle, draw, dotted, fit=(out1) (out2) (out3) (out4)] {};
      }
      ]
      |[alias=in1]| {} & |[alias=JA]| {\mathrm{JA}} & |[alias=out1]| {}\\
      |[alias=in2]| {} & |[alias=JAZ]| {\mathrm{JAZ}} & |[alias=out2]| {}\\
    	|[alias=in3]| {\bullet} & |[alias=MYC2]| {\mathrm{MYC2}} & |[alias=out3]| {\bullet}\\
      |[alias=in4]| {} &  |[alias=JAgen]| \text{JA-responsive genes} & |[alias=out4]| {}
      \arrow[from=1-2, to=2-2]
      \arrow[maps to, no head, from=3-2, to=2-2]
      \arrow[from=3-2, to=4-2]
      \arrow[from=3-1, to=3-2]
      \arrow[from=3-3, to=3-2]
    \end{tikzcd}
  \end{equation}
  Finally, the inhibition of WRKY by MYC2 will have one input to WRKY and one
  output from MYC2, whereas the inhibition of ERF by MYC2 will have one output
  from ERF and one input from MYC2.
  \begin{equation} \label{eq:open-MYC2-inhib}
    \begin{tikzcd}[
      column sep=large, row sep=small,
      /tikz/execute at end picture={
        \node (IN) [rectangle, draw, dotted, fit=(in1) (in2)] {};
        \node (inhib) [rectangle, draw, dotted, fit=(WRKY) (MYC2)] {};
        \node (OUT) [rectangle, draw, dotted, fit=(out1) (out2)] {};
      }
      ]
      |[alias=in1]| {\bullet} & |[alias=WRKY]| {\mathrm{WRKY}} & |[alias=out1]| {}\\
      |[alias=in2]| {} & |[alias=MYC2]| {\mathrm{MYC2}} & |[alias=out2]| {\bullet}
      \arrow[maps to, no head, from=1-2, to=2-2]
      \arrow[from=1-1, to=1-2]
      \arrow[from=2-3, to=2-2]
    \end{tikzcd}
    \qquad\qquad
    \begin{tikzcd}[
      column sep=large, row sep=small,
      /tikz/execute at end picture={
        \node (IN) [rectangle, draw, dotted, fit=(in1) (in2)] {};
        \node (inhib) [rectangle, draw, dotted, fit=(ERF) (MYC2)] {};
        \node (OUT) [rectangle, draw, dotted, fit=(out1) (out2)] {};
      }
      ]
      |[alias=in1]| {} & |[alias=ERF]| {\mathrm{ERF}} & |[alias=out1]| {\bullet}\\
      |[alias=in2]| {\bullet} & |[alias=MYC2]| {\mathrm{MYC2}} & |[alias=out2]| {}
      \arrow[maps to, no head, from=1-2, to=2-2]
      \arrow[from=1-3, to=1-2]
      \arrow[from=2-1, to=2-2]
    \end{tikzcd}
  \end{equation}
  The only thing left to do is compose (from left to right) these open
  regulatory nets. Starting with \cref{eq:open-JA-sig}, we compose it with the
  left part of \cref{eq:open-MYC2-inhib} by first putting them side by side and
  identifying the intermediate sets
  \begin{equation} \label{eq:open-SA-sig-composed-WRKY-inhib}
    \begin{tikzcd}[
      column sep=large, row sep=small,
      /tikz/execute at end picture={
        \node (IN) [rectangle, draw, dotted, fit=(in1) (in2) (in3) (in4)] {};
        \node (SAres) [rectangle, draw, dotted, fit=(SA) (NPR1) (WRKY1) (SAgen)] {};
        \node (mid) [rectangle, draw, dotted, fit=(mid1) (mid2) (mid3) (mid4)] {};
        \node (inhib) [rectangle, draw, dotted, fit=(WRKY2) (MYC2)] {};
        \node (OUT) [rectangle, draw, dotted, fit=(out1) (out2) (out3) (out4)] {};
      }
      ]
      |[alias=in1]| {} & |[alias=SA]| {\mathrm{SA}} & |[alias=mid1]| {} & {} & |[alias=out1]| {}\\
      |[alias=in2]| {} & |[alias=NPR1]| {\mathrm{NPR1}} & |[alias=mid2]| {} & {} & |[alias=out2]| {}\\
    	|[alias=in3]| {} & |[alias=WRKY1]| {\mathrm{WRKY}} & 	|[alias=mid3]| {\bullet} & |[alias=WRKY2]| {\mathrm{WRKY}} & |[alias=out3]| {}\\
      |[alias=in4]| {} &  |[alias=SAgen]| \text{SA-responsive genes} & |[alias=mid4]| {} & |[alias=MYC2]| {\mathrm{MYC2}} & |[alias=out4]| {\bullet}
      \arrow[from=1-2, to=2-2]
      \arrow[from=2-2, to=3-2]
      \arrow[from=3-2, to=4-2]
      \arrow[from=3-3, to=3-2]
      \arrow[maps to, no head, from=3-4, to=4-4]
      \arrow[from=3-3, to=3-4]
      \arrow[from=4-5, to=4-4]
    \end{tikzcd}
  \end{equation}
  which we then remove, together with its outgoing arrows, while identifying the
  vertexes they connect.
  \begin{equation} \label{eq:open-SA-sig-intermediate-WRKY-inhib}
    \begin{tikzcd}[
      column sep=large, row sep=small,
      /tikz/execute at end picture={
        \node (IN) [rectangle, draw, dotted, fit=(in1) (in2) (in3) (in4)] {};
        \node (all) [rectangle, draw, dotted, fit=(SA) (SAgen) (mid3) (MYC2)] {};
        \node (OUT) [rectangle, draw, dotted, fit=(out1) (out2) (out3) (out4)] {};
      }
      ]
      |[alias=in1]| {} & |[alias=SA]| {\mathrm{SA}} & |[alias=mid1]| {} & {} & |[alias=out1]| {}\\
      |[alias=in2]| {} & |[alias=NPR1]| {\mathrm{NPR1}} & |[alias=mid2]| {} & {} & |[alias=out2]| {}\\
    	|[alias=in3]| {} & |[alias=WRKY1]| {} & 	|[alias=mid3]| {\mathrm{WRKY}} & |[alias=WRKY2]| {} & |[alias=out3]| {}\\
      |[alias=in4]| {} &  |[alias=SAgen]| \text{SA-responsive genes} & |[alias=mid4]| {} & |[alias=MYC2]| {\mathrm{MYC2}} & |[alias=out4]| {\bullet}
      \arrow[from=1-2, to=2-2]
      \arrow[from=2-2, to=3-3]
      \arrow[from=3-3, to=4-2]
      \arrow[maps to, no head, from=3-3, to=4-4]
      \arrow[from=4-5, to=4-4]
    \end{tikzcd}
  \end{equation}
  Rearranging the picture slightly for a better visualization, we obtain the
  open regulatory network that results from the composition:
  \begin{equation} \label{eq:open-SA-sig-final-WRKY-inhib}
    \begin{tikzcd}[
      column sep=large, row sep=small,
      /tikz/execute at end picture={
        \node (IN) [rectangle, draw, dotted, fit=(in1) (in2) (in3) (in4)] {};
        \node (all) [rectangle, draw, dotted, fit=(SA) (SAgen) (MYC2)] {};
        \node (OUT) [rectangle, draw, dotted, fit=(out1) (out2) (out3) (out4)] {};
      }
      ]
      |[alias=in1]| {} & |[alias=SA]| {\mathrm{SA}} & {} & |[alias=out1]| {}\\
      |[alias=in2]| {} & |[alias=NPR1]| {\mathrm{NPR1}} & {} & |[alias=out2]| {}\\
    	|[alias=in3]| {} & |[alias=WRKY]| {\mathrm{WRKY}} & |[alias=MYC2]| {\mathrm{MYC2}} & |[alias=out3]| {\bullet}\\
      |[alias=in4]| {} &  |[alias=SAgen]| \text{SA-responsive genes} & {} & |[alias=out4]| {}
      \arrow[from=1-2, to=2-2]
      \arrow[from=2-2, to=3-2]
      \arrow[from=3-2, to=4-2]
      \arrow[maps to, no head, from=3-2, to=3-3]
      \arrow[from=3-4, to=3-3]
    \end{tikzcd}.
  \end{equation}
  Next, composing \cref{eq:open-SA-sig-final-WRKY-inhib,eq:open-JA-sig} yields
  \begin{equation} \label{eq:open-SA-JA}
    \begin{tikzcd}[
      column sep=large, row sep=small,
      /tikz/execute at end picture={
        \node (IN) [rectangle, draw, dotted, fit=(in1) (in2) (in3) (in4)] {};
        \node (all) [rectangle, draw, dotted, fit=(SA) (SAgen) (JA) (JAZ) (MYC2) (JAgen)] {};
        \node (OUT) [rectangle, draw, dotted, fit=(out1) (out2) (out3) (out4)] {};
      }
      ]
      |[alias=in1]| {} & |[alias=SA]| {\mathrm{SA}} & |[alias=JA]| {\mathrm{JA}} & |[alias=out1]| {}\\
      |[alias=in2]| {} & |[alias=NPR1]| {\mathrm{NPR1}} & |[alias=JAZ]| {\mathrm{JAZ}} & |[alias=out2]| {}\\
    	|[alias=in3]| {} & |[alias=WRKY]| {\mathrm{WRKY}} & |[alias=MYC2]| {\mathrm{MYC2}} & |[alias=out3]| {\bullet}\\
      |[alias=in4]| {} &  |[alias=SAgen]| \text{SA-responsive genes} & |[alias=JAgen]| \text{JA-responsive genes} & |[alias=out4]| {}
      \arrow[from=1-2, to=2-2]
      \arrow[from=2-2, to=3-2]
      \arrow[from=3-2, to=4-2]
      \arrow[from=1-3, to=2-3]
      \arrow[maps to, no head, from=3-3, to=2-3]
      \arrow[from=3-3, to=4-3]
      \arrow[maps to, no head, from=3-2, to=3-3]
      \arrow[from=3-4, to=3-3]
    \end{tikzcd}.
  \end{equation}
  Then composing \cref{eq:open-SA-JA} with the right part of
  \cref{eq:open-MYC2-inhib} yields
  \begin{equation} \label{eq:open-SA-JA-ERF}
    \begin{tikzcd}[
      column sep=small, row sep=small,
      /tikz/execute at end picture={
        \node (IN) [rectangle, draw, dotted, fit=(in1) (in2) (in3) (in4)] {};
        \node (all) [rectangle, draw, dotted, fit=(SA) (SAgen) (JA) (JAZ) (MYC2) (JAgen) (ERF)] {};
        \node (OUT) [rectangle, draw, dotted, fit=(out1) (out2) (out3) (out4)] {};
      }
      ]
      |[alias=in1]| {} & |[alias=SA]| {\mathrm{SA}} & |[alias=JA]| {\mathrm{JA}} & {} & |[alias=out1]| {}\\
      |[alias=in2]| {} & |[alias=NPR1]| {\mathrm{NPR1}} & |[alias=JAZ]| {\mathrm{JAZ}} & {} & |[alias=out2]| {}\\
    	|[alias=in3]| {} & |[alias=WRKY]| {\mathrm{WRKY}} & |[alias=MYC2]| {\mathrm{MYC2}} & |[alias=ERF]| {\mathrm{ERF}} & |[alias=out3]| {\bullet}\\
      |[alias=in4]| {} &  |[alias=SAgen]| \text{SA-responsive genes} & |[alias=JAgen]| \text{JA-responsive genes} & {} & |[alias=out4]| {}
      \arrow[from=1-2, to=2-2]
      \arrow[from=2-2, to=3-2]
      \arrow[from=3-2, to=4-2]
      \arrow[from=1-3, to=2-3]
      \arrow[maps to, no head, from=3-3, to=2-3]
      \arrow[from=3-3, to=4-3]
      \arrow[maps to, no head, from=3-2, to=3-3]
      \arrow[maps to, no head, from=3-4, to=3-3]
      \arrow[from=3-5, to=3-4]
    \end{tikzcd}.
  \end{equation}
  Finally, composing \cref{eq:open-SA-JA-ERF,eq:open-ET-sig} yields
  \begin{equation} \label{eq:open-SA-JA-ERF}
    \begin{tikzcd}[
      column sep=small, row sep=small,
      /tikz/execute at end picture={
        \node (IN) [rectangle, draw, dotted, fit=(in1) (in2) (in3) (in4) (in5)] {};
        \node (all) [rectangle, draw, dotted, fit=(SA) (SAgen) (JA) (JAZ) (MYC2) (JAgen) (ET) (ETR1) (EIN2) (ERF) (ETgen)] {};
        \node (OUT) [rectangle, draw, dotted, fit=(out1) (out2) (out3) (out4) (out5)] {};
      }
      ]
      |[alias=in1]| {} & |[alias=SA]| {\mathrm{SA}} & |[alias=JA]| {\mathrm{JA}} & |[alias=ET]| {\mathrm{ET}} & |[alias=out1]| {}\\
      |[alias=in2]| {} & {} & {} & |[alias=ETR1]| {\mathrm{ETR1}} & |[alias=out2]| {}\\
      |[alias=in3]| {} & |[alias=NPR1]| {\mathrm{NPR1}} & |[alias=JAZ]| {\mathrm{JAZ}} & |[alias=EIN2]| {\mathrm{EIN2}} & |[alias=out3]| {}\\
    	|[alias=in4]| {} & |[alias=WRKY]| {\mathrm{WRKY}} & |[alias=MYC2]| {\mathrm{MYC2}} & |[alias=ERF]| {\mathrm{ERF}} & |[alias=out4]| {}\\
      |[alias=in5]| {} &  |[alias=SAgen]| \text{SA-responsive genes} & |[alias=JAgen]| \text{JA-responsive genes} & |[alias=ETgen]| \text{ET-responsive genes} & |[alias=out5]| {}
      \arrow[from=1-2, to=3-2]
      \arrow[from=3-2, to=4-2]
      \arrow[from=4-2, to=5-2]
      \arrow[from=1-3, to=3-3]
      \arrow[maps to, no head, from=4-3, to=3-3]
      \arrow[from=4-3, to=5-3]
      \arrow[maps to, no head, from=4-2, to=4-3]
      \arrow[maps to, no head, from=4-4, to=4-3]
      \arrow[maps to, no head, from=2-4, to=1-4]
      \arrow[maps to, no head, from=3-4, to=2-4]
      \arrow[from=3-4, to=4-4]
      \arrow[from=4-4, to=5-4]
    \end{tikzcd}
  \end{equation}
  which corresponds to the open regulatory network governing the biotic
  responses (cf.\ \cref{eq:signaling-reg-nets}). As mentioned at the beginning
  of the example, although we made a choice of subsystems and a choice in the
  order of compositions, any other choice will give an isomorphic result by the
  horizontal associativity of the double category $\Open{\SgnGraph}$.
\end{example}

\subsection{Refining regulatory networks using signed categories and functors}
\label{sec:sgn-cat}

While morphisms of signed graph have their uses, they do not capture the
important idea of \emph{refining} regulatory networks, in which an interaction
in one network is realized as a composite of several interactions in another. To
express refinement, we must generalize our notion of morphism from graph
homomorphisms to functors. This, in turn, requires the concept of a \emph{signed
  category}.

\begin{definition}[Signed categories] \label{def:sgn-category}
  The \define{category of signed categories} is the slice category
  \begin{equation*}
    \SgnCat \coloneqq \Cat/\Sgn,
  \end{equation*}
  where $\Cat$ is the category of small categories and the group of signs,
  $\Sgn$, is regarded as a category with one object and two morphisms.
\end{definition}

Unpacking the definition, a \define{signed category} is a category $\cat{C}$ in
which every morphism $f$ is assigned a sign $\sgn(f) \in \{1,-1\}$ in a
functorial way, meaning that
\begin{equation*}
  \sgn(x_0 \xrightarrow{f_1} x_1 \xrightarrow{f_2} \cdots \xrightarrow{f_n} x_n)
    = \prod_{i=1}^n \sgn(f_i)
\end{equation*}
for every $n \geq 0$ and every sequence of composable morphisms
$f_1, \dots, f_n$. In particular ($n=0$), the identity morphisms have positive
sign. A \define{morphism of signed categories}, or \define{signed functor}, is a
functor $F: \cat{C} \to \cat{D}$ between signed categories that preserves the
signs, meaning that
\begin{equation*}
  \sgn_{\cat{D}}(F(f)) = \sgn_{\cat{C}}(f)
\end{equation*}
for every morphism $f$ in $\cat{C}$.

Since our aim is to have a more flexible notion of morphism between signed
graphs, we will mostly restrict ourselves to those signed categories that are
freely generated by a signed graph. The \define{free signed category} or
\define{signed path category} functor
\begin{equation*}
  \Path: \SgnGraph \to \SgnCat
\end{equation*}
sends a signed graph $X$ to the signed category $\Path(X)$ having
\begin{itemize}[noitemsep]
  \item as objects, the vertices of $X$;
  \item as morphisms from $x$ to $y$, the paths in $X$ from $x$ to $y$, whose
    sign is defined to be the product of the signs of the edges comprising the
    path.
\end{itemize}
Composition of paths is by concatenation, which clearly preserves the sign. The
identity morphism at $x$ is the empty path at $x$, which has positive sign. The
functor $\Path$ on signed graphs is completely analogous to the usual free
category functor on graphs, and as such is a left adjoint to the forgetful
functor from signed categories to signed graphs. Signed categories are likewise
algebras for the corresponding monad on the category $\SgnGraph$.

By convention, if $X$ and $Y$ are signed graphs, we say that a \define{signed
functor} from $X$ to $Y$ is a signed functor $F: \Path(X) \to \Path(Y)$ between
the corresponding signed path categories. Since the morphisms of $\Path(X)$ are
freely generated by the edges in $X$, a signed functor from $X$ to $Y$ is
uniquely determined by a morphism of signed graphs from $X$ to the underlying
signed graph of $\Path(Y)$. This means that each edge in $X$ is sent to an
appropriately signed \emph{path} of edges in $Y$, which can be regarded as a
refinement of the relationship that the edge represents. 

\begin{definition}[Category of refinements]\label{def:refine-cat}
  The category $\SgnGraph_{\Path}$ has as objects signed graphs and as morphisms
  signed functors between them; in other words, it is the Kleisli category for
  the free signed category monad on $\SgnGraph$.
\end{definition}

We now have a precise language with which to classify network motifs and their
occurrences. As a first example, Alon identifies four types of \emph{incoherent
  feedforward loop (FFL)} involving three components,
\begin{equation*}
  \begin{tikzcd}[row sep=small]
    x && x && x && x \\
    y && y && y && y \\
    z && z && z && z
    \arrow[from=1-1, to=2-1]
    \arrow[maps to, no head, from=3-1, to=2-1]
    \arrow[curve={height=18pt}, from=1-1, to=3-1]
    \arrow[maps to, no head, from=2-3, to=1-3]
    \arrow[curve={height=-18pt}, maps to, no head, from=3-3, to=1-3]
    \arrow[maps to, no head, from=3-3, to=2-3]
    \arrow[from=1-5, to=2-5]
    \arrow[from=2-5, to=3-5]
    \arrow[curve={height=-18pt}, maps to, no head, from=3-5, to=1-5]
    \arrow[curve={height=18pt}, from=1-7, to=3-7]
    \arrow[maps to, no head, from=2-7, to=1-7]
    \arrow[from=2-7, to=3-7]
  \end{tikzcd},
\end{equation*}
those of \emph{type 1, 2, 3, and 4}, respectively \cite[Figure 2a]{alon2007}.
Besides having three components, what these motifs have in common is that there
exists a signed functor into each of them from the signed graph
$I_{\pm} \coloneqq \left\{
  \begin{tikzcd}[cramped, sep=small]
  	\bullet & \bullet
  	\arrow[curve={height=-6pt}, from=1-1, to=1-2]
  	\arrow[curve={height=-6pt}, maps to, no head, from=1-2, to=1-1]
  \end{tikzcd}
\right\}$ having two parallel arrows of opposite sign. The network $I_{\pm}$ is
thus the ``generic'' incoherent feedforward loop, in the sense that signed
functors out of it refine the pattern in specific ways. A similar situation
holds for other common network motifs (\cref{tab:motifs}), which motivates the
following definition.

\begin{definition}[Motif instance]
  Given a signed graph $A$, regarded as a motif, an \define{instance} or
  \define{occurrence} of the motif $A$ in a network $X$ is a monic signed
  functor $A \monicto X$.
\end{definition}

\begin{table}
  \centering
  \begin{tabular}{lc}
    \toprule
    Motif & Generic instance \\
    \midrule \addlinespace[0.5em]
    Positive autoregulation &
      $L_+ \coloneqq \left\{
        \begin{tikzcd}[cramped]
          \bullet \arrow[loop, out=30, in=330, looseness=5]
        \end{tikzcd}
      \right\}$ \\ \addlinespace[0.5em]
    Negative autoregulation &
      $L_- \coloneqq \left\{
        \begin{tikzcd}[cramped]
          \bullet \arrow[loop, in=30, out=330, maps to, no head, looseness=5]
        \end{tikzcd}
      \right\}$ \\ \addlinespace[0.5em]
    Coherent feedforward loop &
      $I_{++} \coloneqq \left\{
        \begin{tikzcd}[cramped, sep=small]
          \bullet & \bullet
          \arrow[curve={height=-6pt}, from=1-1, to=1-2]
          \arrow[curve={height=6pt}, from=1-1, to=1-2]
        \end{tikzcd}
      \right\}$ \\ \addlinespace[0.5em]
    Incoherent feedforward loop &
      $I_{\pm} \coloneqq \left\{
        \begin{tikzcd}[cramped, sep=small]
          \bullet & \bullet
          \arrow[curve={height=-6pt}, from=1-1, to=1-2]
          \arrow[curve={height=-6pt}, maps to, no head, from=1-2, to=1-1]
        \end{tikzcd}
      \right\}$ \\ \addlinespace[0.5em]
    Positive feedback loop &
      $L_{++} \coloneqq \left\{
        \begin{tikzcd}[cramped, sep=small]
          \bullet & \bullet
          \arrow[curve={height=-6pt}, from=1-1, to=1-2]
          \arrow[curve={height=-6pt}, from=1-2, to=1-1]
        \end{tikzcd}
      \right\}$ \\ \addlinespace[0.5em]
    Negative feedback loop &
      $L_{\pm} \coloneqq \left\{
        \begin{tikzcd}[cramped, sep=small]
          \bullet & \bullet
          \arrow[curve={height=-6pt}, from=1-1, to=1-2]
          \arrow[curve={height=6pt}, maps to, no head, from=1-1, to=1-2]
        \end{tikzcd}
      \right\}$ \\ \addlinespace[0.5em]
    Double-negative feedback loop &
      $L_{--} \coloneqq \left\{
        \begin{tikzcd}[cramped, sep=small]
          \bullet & \bullet
          \arrow[curve={height=6pt}, maps to, no head, from=1-2, to=1-1]
          \arrow[curve={height=6pt}, maps to, no head, from=1-1, to=1-2]
        \end{tikzcd}
      \right\}$ \\ \addlinespace[0.5em]
    \bottomrule
  \end{tabular}
  \caption{Common motifs in biochemical regulation networks
    \cite{alon2007,tyson2010}}
  \label{tab:motifs}
\end{table}

Note that a signed functor is a monomorphism exactly when the functor is an
embedding of categories, i.e., an injective-on-objects, faithful functor.
Requiring the functor in the definition to be monic excludes ``degenerate
instances'' of motifs where vertices or edges are identified.

Now, should the incoherent FFL be regarded as a network motif, or is it the more
specific types, such as the incoherent FFL of type 1, that are motifs? From our
point of view, they are all equally motifs but they have different degrees of
specificity, and the functorial language clarifies how motifs are iteratively
refined. Specifically, an instance of an incoherent FFL of type 1 in a network
$X$ also gives an instance of an incoherent FFL in $X$ (of unspecified type),
simply by composing the monomorphisms involved:
\begin{equation*}
  I_{\pm} \cong
  \left\{
    \begin{tikzcd}[cramped, sep=small]
      x & z
      \arrow[curve={height=-6pt}, from=1-1, to=1-2]
      \arrow[curve={height=-6pt}, maps to, no head, from=1-2, to=1-1]
    \end{tikzcd}
  \right\}
  \quad\monicto\quad
  \left\{
    \begin{tikzcd}[cramped, sep=small]
      x & y & z
      \arrow[curve={height=-12pt}, from=1-1, to=1-3]
      \arrow[from=1-1, to=1-2]
      \arrow[maps to, no head, from=1-3, to=1-2]
    \end{tikzcd}
  \right\}
  \quad\monicto\quad X.
\end{equation*}
Similarly, in the notation of \cref{tab:motifs}, any instance of double-negative
feedback ($L_{--}$) also gives an instance of positive autoregulation ($L_+$)
\cite{crews2009}, via the monomorphism $L_+ \monicto L_{--}$ that sends the
positive loop to the double-negative 2-cycle.

For any choice of motif $A$, the mapping that sends a regulatory network $X$ to
the set of occurrences of $A$ in $X$ is a representable functor
\begin{equation*}
  \Hom(A, -): (\SgnGraph_{\Path})_m \to \Set,
\end{equation*}
where $(\SgnGraph_{\Path})_m$ denotes the wide subcategory of monomorphisms in
$\SgnGraph_{\Path}$. The functorality implies that, for any motif $A$, a
monomorphism between regulatory networks induces a map between instances of $A$
in those networks.

We now extend the construction of open signed graphs to open signed categories.

\begin{proposition}
  The category of signed categories is complete and cocomplete.
\end{proposition}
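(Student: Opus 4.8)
The plan is to imitate the proof that $\SgnGraph = \Graph/\Sgn$ is complete and cocomplete, replacing $\Graph$ by $\Cat$. Writing $\SgnCat = \Cat/\Sgn$ as a slice, the statement reduces to two facts: first, that the category $\Cat$ of small categories is itself complete and cocomplete; and second, that slices of a bicomplete category are again bicomplete \cite[Proposition 3.5.5]{riehl2016}. Granting these, the conclusion is immediate. So the real work is to recall why $\Cat$ is bicomplete, taking care to note where the argument departs from the signed graph case.

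Completeness of $\Cat$ is the easy half: limits are computed componentwise, as the limits in $\Set$ of the underlying object-sets and morphism-sets, with identities and composition induced levelwise. The point of contrast with $\Graph$ is that $\Cat$ is \emph{not} a copresheaf category, so one cannot invoke \cite[Proposition 3.3.9]{riehl2016} to obtain cocompleteness for free, as was done in the earlier proposition. Instead I would appeal to the classical fact that $\Cat$ is cocomplete: it has all small coproducts, given by disjoint unions, and all coequalizers, the latter obtained by quotienting the underlying object- and morphism-data and then freely adjoining the composites forced by the quotient; alternatively, one may cite that $\Cat$ is locally finitely presentable. These together furnish all small colimits.

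Once $\Cat$ is known to be bicomplete, the slice bookkeeping is routine and parallels the graph case verbatim. Colimits in $\Cat/\Sgn$ are created by the forgetful functor $\Cat/\Sgn \to \Cat$: one forms the colimit of the underlying signed categories and equips it with the unique sign functor to $\Sgn$ supplied by the universal property. Limits are inherited by computing, in $\Cat$, the limit of the given diagram augmented by its canonical cone to $\Sgn$, so that (for instance) products in the slice are fibered products over $\Sgn$ and the terminal object is $\Sgn$ itself. The one genuinely nonformal step—and the main obstacle—is the cocompleteness of $\Cat$, specifically the existence of coequalizers, which admits no componentwise formula and requires the explicit quotient-and-free-composite construction; everything else is formal.
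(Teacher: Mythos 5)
Your proof is correct and follows essentially the same route as the paper's: reduce to the bicompleteness of $\Cat$ plus the fact that slices of bicomplete categories are bicomplete \cite[Proposition 3.5.5]{riehl2016}. The only difference is that where you spell out why $\Cat$ is cocomplete (coproducts plus coequalizers, or local finite presentability), the paper simply cites \cite[Proposition 3.5.6]{riehl2016}; your added observation that $\Cat$ is not a copresheaf category, so the earlier argument for $\SgnGraph$ does not transfer verbatim, is accurate.
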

\begin{proof}
  Because the category $\Cat$ is complete and cocomplete \cite[Proposition
  3.5.6]{riehl2016}, its slice $\SgnCat = \Cat/\Sgn$ is also \cite[Proposition
  3.5.5]{riehl2016}.
\end{proof}

\begin{proposition}[Open signed categories] \label{prop:open-sgn-cats}
  There is a symmetric monoidal double category of open signed categories,
  $\Open{\SgnCat}$, having
  \begin{itemize}[noitemsep]
    \item as objects, sets $A, B, C, \dots$;
    \item as vertical morphisms, functions $f: A \to B$;
    \item as horizontal morphisms, \define{open signed categories}, which consist of
      a signed category $\cat{C}$ together with a cospan of sets
      $A_0 \xrightarrow{\ell_0} \Ob(\cat{C}) \xleftarrow{\ell_1} A_1$;
    \item as cells, \define{morphisms of open signed categories}
      $(\cat{C},\ell_0,\ell_1) \to (\cat{D},m_0,m_1)$, which consist of a signed
      functor $F: \cat{C} \to \cat{D}$ along with functions $f_i: A_i \to B_i$,
      $i=0,1$, making the diagram commute:
      \begin{equation*}
        \begin{tikzcd}
          {A_0} & {\mathrm{Ob}(\mathsf{C})} & {A_1} \\
          {B_0} & {\mathrm{Ob}(\mathsf{D})} & {B_1}
          \arrow["{\ell_0}", from=1-1, to=1-2]
          \arrow["{f_0}"', from=1-1, to=2-1]
          \arrow["{\ell_1}"', from=1-3, to=1-2]
          \arrow["{\mathrm{Ob}(F)}"', from=1-2, to=2-2]
          \arrow["{m_0}"', from=2-1, to=2-2]
          \arrow["{f_1}", from=1-3, to=2-3]
          \arrow["{m_1}", from=2-3, to=2-2]
        \end{tikzcd}.
      \end{equation*}
  \end{itemize}
  Vertical composition is by composition in $\Set$ and in $\SgnCat$. Horizontal
  composition and monoidal products are by pushouts and coproducts in $\SgnCat$,
  respectively, viewing the sets in the feet of cospans as discrete signed
  categories.

  Moreover, the signed path category functor extends to a symmetric monoidal
  double functor
  \begin{equation*}
    \Path: \Open{\SgnGraph} \to \Open{\SgnCat}.
  \end{equation*}
\end{proposition}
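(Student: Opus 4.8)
The plan is to mirror the structured-cospans argument used for \cref{prop:open-sgn-graphs}, now applied to the adjunction between $\Set$ and $\SgnCat$, and then to obtain the double functor $\Path$ by exploiting that $\Path$ is itself a left adjoint compatible with the two discrete-object functors. First I would record the adjunction $\Disc \dashv \Ob \colon \SgnCat \to \Set$, where $\Ob$ sends a signed category to its set of objects and $\Disc$ sends a set $A$ to the discrete signed category on $A$ (only identity morphisms, necessarily of positive sign). The bijection $\Hom_{\SgnCat}(\Disc A, \cat{C}) \cong \Hom_{\Set}(A, \Ob\cat{C})$ is immediate, since a signed functor out of a discrete signed category is just a function on objects, and the slice constraint over $\Sgn$ is automatic on identities. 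Having shown in the preceding proposition that $\SgnCat$ is cocomplete, I would apply \cite[Theorems 2.3 and 3.9]{baez2020} to the left adjoint $L \coloneqq \Disc \colon \Set \to \SgnCat$, yielding the symmetric monoidal double category of $\Disc$-structured cospans. The structured/decorated-cospan translation already used in the proof of \cref{prop:open-sgn-graphs}---valid because $\Disc \dashv \Ob$ is a \emph{natural} bijection---then identifies this double category with the one described in the statement, whose horizontal morphisms carry cospans of the form $A_0 \to \Ob(\cat{C}) \leftarrow A_1$.

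For the double functor, the crucial structural facts are that $\Path \colon \SgnGraph \to \SgnCat$ is a left adjoint (to the forgetful functor), so it preserves all colimits, and that it commutes strictly with the discrete-object functors: a discrete signed graph is sent by $\Path$ to the corresponding discrete signed category, so $\Path \circ \Disc_{\SgnGraph} = \Disc_{\SgnCat}$ as functors $\Set \to \SgnCat$. Moreover $\Ob(\Path X) = X(V)$ on the nose, so the object legs of a cospan are carried unchanged. Concretely, I would define $\Path$ on $\Open{\SgnGraph}$ by sending a set to itself, a function to itself, an open signed graph $(X,\ell_0,\ell_1)$ to $(\Path X, \ell_0, \ell_1)$, and a cell with underlying signed-graph map $\phi$ to the cell with underlying signed functor $\Path(\phi)$; the square defining a cell is preserved because $\Ob(\Path(\phi)) = \phi_V$. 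The comparison cells witnessing compatibility with horizontal composition and with the monoidal product are the canonical isomorphisms supplied by $\Path$ preserving the defining colimits: the image under $\Path$ of a pushout (respectively coproduct) in $\SgnGraph$ is the corresponding pushout (respectively coproduct) in $\SgnCat$. The identity on $\Set$ intertwining $\Disc_{\SgnGraph}$ with $\Disc_{\SgnCat}$ makes this precisely a morphism of the structured-cospans data.

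The main obstacle is bookkeeping rather than conceptual. Because horizontal composition in both double categories is defined by pushouts, which are determined only up to canonical isomorphism, $\Path$ is a strong (pseudo) double functor rather than a strict one, and one must check that the comparison isomorphisms satisfy the associativity, unitality, and monoidal coherence axioms of a symmetric monoidal double functor, including compatibility with the symmetry. All of these follow formally from the universal properties of the colimits involved together with the strict equality $\Path \circ \Disc_{\SgnGraph} = \Disc_{\SgnCat}$, so the verification reduces to the same uniqueness-of-colimits reasoning underlying the structured/decorated correspondence. I would therefore either invoke the functoriality of the structured-cospans construction with respect to such commuting squares of left adjoints or spell out these coherences directly, exactly as the translation established in \cref{prop:open-sgn-graphs} is reused tacitly elsewhere.
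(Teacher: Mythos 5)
Your proposal is correct and takes essentially the same route as the paper: both construct $\Open{\SgnCat}$ as $\Disc$-structured cospans via the adjunction $\Disc \dashv \Ob$ (the paper presents this adjunction as the composite of $\Disc \dashv \ev_V$ and $\Path \dashv U$, you verify it directly, which is an immaterial difference), and both obtain the double functor from the commuting square of left adjoints by invoking the functoriality of the structured-cospans construction \cite[Theorem 4.3]{baez2020}. The only divergence is that you offer hand-verification of the coherence axioms as a fallback, which the paper avoids by citing that theorem outright.
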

\begin{proof}
  We take $\Open{\SgnCat}$ to be the symmetric monoidal double category of
  $L'$-structured cospans for the functor $L' \coloneqq \Disc: \Set \to \SgnCat$
  involved the composite adjunction
  \begin{equation*}
    \begin{tikzcd}
      {\mathsf{Set}} & {\mathsf{SgnCat}} & {=} & {\mathsf{Set}} & {\mathsf{SgnGraph}} & {\mathsf{SgnCat}}
      \arrow[""{name=0, anchor=center, inner sep=0}, "{\mathrm{Disc}}", curve={height=-12pt}, from=1-1, to=1-2]
      \arrow[""{name=1, anchor=center, inner sep=0}, "{\mathrm{Ob}}", curve={height=-12pt}, from=1-2, to=1-1]
      \arrow[""{name=2, anchor=center, inner sep=0}, "{\mathrm{Disc}}", curve={height=-12pt}, from=1-4, to=1-5]
      \arrow[""{name=3, anchor=center, inner sep=0}, "{\mathrm{ev}_V}", curve={height=-12pt}, from=1-5, to=1-4]
      \arrow[""{name=4, anchor=center, inner sep=0}, "U", curve={height=-12pt}, from=1-6, to=1-5]
      \arrow[""{name=5, anchor=center, inner sep=0}, "{\mathrm{Path}}", curve={height=-12pt}, from=1-5, to=1-6]
      \arrow["\dashv"{anchor=center, rotate=-90}, draw=none, from=0, to=1]
      \arrow["\dashv"{anchor=center, rotate=-90}, draw=none, from=2, to=3]
      \arrow["\dashv"{anchor=center, rotate=-90}, draw=none, from=5, to=4]
    \end{tikzcd}.
  \end{equation*}
  On the right hand side, the first adjunction was already used in the proof of
  \cref{prop:open-sgn-graphs}, and the second adjunction is the free-forgetful
  adjunction between signed graphs and signed categories.

  To prove the last statement, we notice that all functors involved in the
  commutative square
  \begin{equation*}
    \begin{tikzcd}
      \Set && \SgnGraph \\
      \Set && \SgnCat
      \arrow[Rightarrow, no head, from=1-1, to=2-1]
      \arrow["{L = \Disc}", from=1-1, to=1-3]
      \arrow["{L' = \Disc}"', from=2-1, to=2-3]
      \arrow["\Path", from=1-3, to=2-3]
    \end{tikzcd}
  \end{equation*}
  are left adjoints, hence preserve colimits. We can therefore appeal to
  \cite[Theorem 4.3]{baez2020} to obtain a symmetric monoidal double functor
  \begin{equation*}
    \Open{\SgnGraph} \cong {}_{L}\Csp(\SgnGraph) \to
      {}_{L'}\Csp(\SgnCat) \cong \Open{\SgnCat}. \qedhere
  \end{equation*}
\end{proof}

\subsection{Mechanistic models as Petri nets with links}
\label{sec:sgn-petri}

However challenging they may be to identify through experiments and data
analysis, regulatory networks still only summarize how the components of a
complex biochemical system interact. Regulatory networks typically include only
a subset of the system's components, and they do not model individual reactions
and processes, only pairwise promoting or inhibiting interactions between
components. In this sense, regulatory networks are not fully mechanistic models,
even if they have a stronger causal interpretation than, say, a correlation
matrix.

By contrast, mechanistic models in biochemistry model individual reactions,
which requires a different formalism. Pictures like the following, adapted from
Voit's review \cite[Figure 4]{voit2013}, are common in systems biology.
\begin{equation} \label{eq:voit-ex}
  \begin{tikzcd}
    {} & A & B & D & {} \\
    & {} & C
    \arrow[curve={height=-14pt}, no head, from=2-3, to=1-4]
    \arrow[""{name=0, anchor=center, inner sep=0}, from=1-2, to=1-3]
    \arrow[from=1-3, to=1-4]
    \arrow[from=1-1, to=1-2]
    \arrow[from=2-2, to=2-3]
    \arrow[from=1-4, to=1-5]
    \arrow["{-}"', curve={height=24pt}, shorten >=6pt, dashed, from=1-4, to=0]
  \end{tikzcd}
\end{equation}
This diagram possesses two distinctive features. First, directed hyperedges
represent reactions having a number of inputs or outputs different than one.
There are, for example, hyperedges from $B$ \emph{and} $C$ to $D$, from nothing
to $A$ (an inflow), and from $D$ to nothing (an outflow). If, in lieu of
hyperedges, we introduce a second type of vertex, we obtain a structure similar
to a Petri net
\begin{equation} \label{eq:voit-ex-sgn-petri}
  \begin{tikzpicture}[baseline={(current bounding box.center)}]
    \node[transition] (in1) at (-3,0) {};
    \node[species] (A) at (-2,0) {$A$};
    \node[transition] (AB) at (-1,0) {};
    \node[transition] (in2) at (-1,-1) {};
    \node[species] (B) at (0,0) {$B$};
    \node[species] (C) at (0,-1) {$C$};
    \node[transition] (BCD) at (1,-0.5) {};
    \node[species] (D) at (2,-0.5) {$D$};
    \node[transition] (out1) at (3,-0.5) {};
    \draw[arc] (in1) to (A);
    \draw[arc] (in2) to (C);
    \draw[arc] (A) to (AB);
    \draw[arc] (AB) to (B);
    \draw[arc] (B) to (BCD);
    \draw[arc] (C) to (BCD);
    \draw[arc] (BCD) to (D);
    \draw[arc] (D) to (out1);
    \draw[link, bend right=45] (D.north) to node[midway,above] {$-$} (AB.north);
  \end{tikzpicture}
\end{equation}
but with the second distinctive feature of having \emph{signed links} from the
first type of vertices (\emph{species}) to the second type (\emph{transitions}),
whose signs indicate promotion or inhibition.

In this section, we explain how a Petri net with signed links can provide a
\emph{mechanism} for a regulatory network. This involves constructing a functor
from Petri nets with signed links to signed graphs, approximating the former as
the latter. As a prerequisite, we need a precise definition of a Petri net with
links.

\begin{definition}[Petri net with links]
  The \define{schema for Petri nets with links} is the category
  $\Sch{\PetriLink}$ freely generated by these objects and morphisms:
  \begin{equation*}
    \begin{tikzcd}
      & I \\
      S & O & T \\
      & L
      \arrow["{\mathrm{src}_L}", from=3-2, to=2-1]
      \arrow["{\mathrm{tgt}_L}"', from=3-2, to=2-3]
      \arrow["{\mathrm{src}_I}"', from=1-2, to=2-1]
      \arrow["{\mathrm{tgt}_I}", from=1-2, to=2-3]
      \arrow["{\mathrm{tgt}_O}"'{pos=0.4}, from=2-2, to=2-1]
      \arrow["{\mathrm{src}_O}"{pos=0.4}, from=2-2, to=2-3]
    \end{tikzcd}.
  \end{equation*}
  A \define{Petri nets with links} is a functor $P: \Sch{\PetriLink} \to \Set$,
  and a \define{morphism} of these is a natural transformation. A morphism
  $\phi: P \to Q$ \define{preserves arities} if the naturality squares
  associated with the morphisms $I \to T$ and $O \to T$ are also pullback
  squares:
  \begin{equation*}
    \begin{tikzcd}
      {P(I)} & {P(T)} \\
      {Q(I)} & {Q(T)}
      \arrow["{P(\mathrm{tgt}_I)}", from=1-1, to=1-2]
      \arrow["{\phi_I}"', from=1-1, to=2-1]
      \arrow["{\phi_T}", from=1-2, to=2-2]
      \arrow["{P(\mathrm{tgt}_I)}"', from=2-1, to=2-2]
      \arrow["\lrcorner"{anchor=center, pos=0.125}, draw=none, from=1-1, to=2-2]
    \end{tikzcd}
    \qquad\qquad
    \begin{tikzcd}
      {P(O)} & {P(T)} \\
      {Q(O)} & {Q(T)}
      \arrow["{P(\mathrm{src}_O)}", from=1-1, to=1-2]
      \arrow["{\phi_O}"', from=1-1, to=2-1]
      \arrow["{\phi_T}", from=1-2, to=2-2]
      \arrow["{P(\mathrm{src}_O)}"', from=2-1, to=2-2]
      \arrow["\lrcorner"{anchor=center, pos=0.125}, draw=none, from=1-1, to=2-2]
    \end{tikzcd}.
  \end{equation*}
  Petri nets with links and their morphisms form the category $\PetriLink$.
\end{definition}

To explicate the definition, a Petri net with links $P$ consists of
\begin{itemize}[noitemsep]
  \item a set $P(S)$ of \define{species};
  \item a set $P(T)$ of \define{transitions};
  \item a set $P(I)$ of \define{input arcs} going from species to transitions,
    via maps \mbox{$P(\src_I): P(I) \to P(S)$} and $P(\tgt_I): P(I) \to P(T)$;
  \item a set $P(O)$ of \define{output arcs} going from transitions to species,
    via maps \mbox{$P(\src_O): P(O) \to P(T)$} and $P(\tgt_O): P(O) \to P(S)$;
    and finally
  \item a set $P(L)$ of \define{links} going from species to transitions, via
    maps \mbox{$P(\src_L): P(L) \to P(S)$} and $P(\tgt_L): P(L) \to P(T)$.
\end{itemize}
The property of preserving arities, called ``etale'' by Kock \cite[\S
3.4]{kock2022}, means that a morphism $\phi: P \to Q$ of Petri nets with links
preserves the input and output arities of all transitions. Namely, for each
transition $t$ in the net $P$ the map $\phi_I: P(I) \to Q(I)$ restricts to a
bijection between the input arcs to $t$ and to $\phi_T(t)$, and similarly the
map $\phi_O: P(O) \to Q(O)$ restricts to a bijection between the output arcs
from $t$ and from $\phi_T(t)$. This property seems appropriate for many
purposes, including in biochemistry, but for mathematical convenience we will
not always assume it.

\begin{remark}[Related literature]
  While not appearing in the literature in precisely this form, our definition
  of a Petri net with links is closely related to several existing concepts.
  Kock has described Petri nets as copresheaves on a category with objects
  $S,T,I,O$ \cite{kock2022}, calling them \emph{whole-grain Petri nets} to
  distinguish them from classical Petri nets, whose semantics are subtly
  different \cite{baez2021}. Meanwhile, the concept of a link is essential to
  \emph{stock and flow diagrams}, originating in the field of system dynamics
  \cite{forrester1961,sterman2000} and recently given a rigorous categorical
  account \cite{baez2022}. The classical literature on Petri nets has also
  considered Petri nets equipped with positive/context and negative/inhibitory
  arcs \cite{agerwala1973}, called \emph{inhibitor nets}
  \cite{hack1976,baldan2004}. Roughly speaking, our Petri nets with signed links
  are related to inhibitor nets (\cite[\mbox{Definition 1}]{baldan2004}) in the
  same way that Kock's whole-grain Petri nets are related to classical Petri
  nets; in both cases, working with copresheaves has technical advantages.
  Finally, we note that the \emph{Petri nets with catalysts} proposed by Baez,
  Foley, and Moeller \cite{baez2019} differ significantly from Petri nets with
  links: the former fix a subset of the species to be catalysts throughout the
  net, whereas the latter uses links to make catalyzation specific to individual
  reactions.
\end{remark}

\begin{remark}[Petri nets as typed graphs]
  Like bare Petri nets, Petri nets with links can be described as graphs with
  two types of vertices. To see this, take the graph
  \begin{equation*}
    T_{\PetriLink} \coloneqq \left\{
      \begin{tikzcd}
        S && T
        \arrow["I", curve={height=-18pt}, from=1-1, to=1-3]
        \arrow["L"', curve={height=18pt}, from=1-1, to=1-3]
        \arrow["O"{description}, from=1-3, to=1-1]
      \end{tikzcd}
    \right\}
  \end{equation*}
  with vertices $S$ and $T$ and edges $I$, $O$, and $L$. The category of Petri
  nets with links and natural transformations is isomorphic to the slice
  category $\Graph/T_{\PetriLink}$. Moreover, the schema category
  $\Sch{\PetriLink}$ is isomorphic to the category of elements of the functor
  $T_{\PetriLink}: \Sch{\Graph} \to \Set$, exemplifying a general fact about
  slices of copresheaf categories \cite[Remark p.\ 303]{street2000}.
\end{remark}

Petri nets with \emph{signed} links are defined analogously to signed graphs
(\cref{def:sgn-graph}).

\begin{definition}[Petri nets with signed links] \label{def:sgn-petri}
  The \define{category of Petri nets with signed links} is the slice category
  \begin{equation*}
    \SgnPetri \coloneqq \PetriLink/P_\Sgn,
  \end{equation*}
  where $P_\Sgn$ is the Petri net with links having one species, one transition,
  one input arc, one output arc, and two links, namely the elements of $\Sgn$.
\end{definition}

Incidentally, the morphism $P \to P_\Sgn$ defining a Petri net with signed links
does \emph{not} preserve arities unless every transition in $P$ has exactly one
input and one output.

We now turn to the main construction of this section, a functor that sends a
Petri net with signed links to the signed graph of interactions implied by the
net. On the example in \cref{eq:voit-ex,eq:voit-ex-sgn-petri}, this functor
gives the signed graph
\begin{equation} \label{eq:voit-ex-sgn-graph}
  \begin{tikzcd}
    A \arrow[loop below, maps to, no head, looseness=7]
    & B \arrow[loop below, maps to, no head, looseness=7]
    & D \arrow[loop below, maps to, no head, looseness=7] \\
    & C \arrow[loop below, maps to, no head, looseness=7]
    \arrow[from=1-1, to=1-2]
    \arrow[from=1-2, to=1-3]
    \arrow[from=2-2, to=1-3]
    \arrow[curve={height=-6pt}, maps to, no head, from=1-2, to=1-3]
    \arrow[curve={height=18pt}, from=1-3, to=1-1]
  \end{tikzcd}.
\end{equation}
In general, the resulting signed graph has the Petri net's species as vertices,
and has signed edges of the following four types:
\begin{enumerate}[(a),noitemsep]
  \item for every input-output pair to a transition, a positive edge from input
    to output;
  \item for every input to a transition, a negative self loop, representing
    consumption by the reaction;
  \item for every signed link, an edge of \emph{opposite} sign going from the
    linked species to each input to the linked transition;
  \item for every signed link, an edge of \emph{equal} sign going from the
    linked species to each output from the linked transition.
\end{enumerate}
All four cases are visible in the example of \cref{eq:voit-ex-sgn-graph}. Each
of these cases is detected by a representable functor of the form
\begin{equation*}
  \Hom_{\PetriLink}(P, -): \PetriLink \to \Set,
\end{equation*}
where $P$ is one of the four Petri nets with links shown in
\cref{fig:petri-queries}. Each net represents a ``pattern'' whose matches
produce edges in the signed graph.

\begin{figure}
  \centering
  \begin{subfigure}[b]{0.24\textwidth}
    \centering
    \begin{tikzpicture}
      \node[species] (in) at (-1,0) {};
      \node[transition] (t) at (0,0) {};
      \node[species] (out) at (1,0) {};
      \draw[arc] (in) to (t);
      \draw[arc] (t) to (out);
    \end{tikzpicture}
    \caption{Input-output pair to transition}
  \end{subfigure}
  \hfill
  \begin{subfigure}[b]{0.24\textwidth}
    \centering
    \begin{tikzpicture}
      \node[species] (in) at (-1,0) {};
      \node[transition] (t) at (0,0) {};
      \draw[arc] (in) to (t);
    \end{tikzpicture}
    \caption{Input to transition}
  \end{subfigure}
  \hfill
  \begin{subfigure}[b]{0.24\textwidth}
    \centering
    \begin{tikzpicture}
      \node[species] (in) at (-1,0) {};
      \node[transition] (t) at (0,0) {};
      \node[species] (s) at (1,0) {};
      \draw[arc] (in) to (t);
      \draw[link, bend right] (s.north) to (t.north);
    \end{tikzpicture}
    \caption{Input to transition with incident link}
  \end{subfigure}
  \hfill
  \begin{subfigure}[b]{0.24\textwidth}
    \centering
    \begin{tikzpicture}
      \node[transition] (t) at (0,0) {};
      \node[species] (out) at (1,0) {};
      \node[species] (s) at (2,0) {};
      \draw[arc] (t) to (out);
      \draw[link, bend right] (s.north) to (t.north);
    \end{tikzpicture}
    \caption{Output from transition with incident link}
  \end{subfigure}
  \caption{Four different Petri nets with links. For each of these instances
    $P$, evaluating the representable functor
    $\Hom_{\PetriLink}(P,-): \PetriLink \to \Set$ gives the edges for one case
    in the case analysis that defines the functor from Petri nets with signed
    links to signed graphs (\cref{prop:sgn-petri-to-sgn-graph}).}
  \label{fig:petri-queries}
\end{figure}
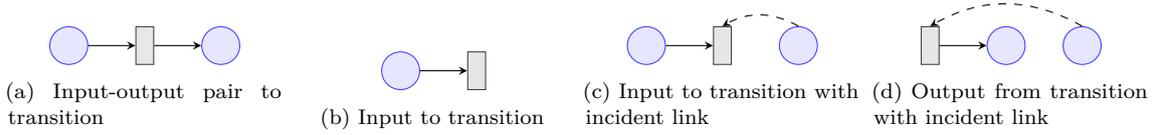

The resulting functor $\SgnPetri \to \SgnGraph$, which restricts on vertices to
the functor picking out the set of species of the Petri net (representable by
the Petri net with one species and no transitions) and restricts on edges to the
disjoint union of the four representable functors described above, has the form
of a \emph{familially representable functor} between copresheaf categories. This
type of functor has been studied extensively within both pure (see, for
instance, \cite[\mbox{Appendix C}]{leinster2004}) and applied category theory
(see \cite{spivak2021}). To be consistent with our existing notation and to
emphasize that such functors have a useful computational interpretation, we
adopt Spivak's perspective that they are \emph{data migration functors}. In this
framing, the schema category of a copresheaf category is regarded as the schema
for a database, a copresheaf on that category is regarded as a database instance
on the schema, and representable functors from the copresheaf category to $\Set$
are regarded as \emph{conjunctive queries}. A data migration functor between
copresheaf categories is then one which restricts on each component of the
codomain to a disjoint union of conjunctive queries, or \emph{duc-query}.

In order to apply this theory and precisely define the functor
$\SgnPetri \to \SgnGraph$, we fully schematize the definitions of signed graphs
and Petri nets with signed links. The \define{schema for signed graphs} is the
category $\Sch{\SgnGraph}$ freely generated by these objects and morphisms:
\begin{equation*}
  \begin{tikzcd}
    V & E & A \arrow[loop right, "\mathrm{neg}", looseness=7]
    \arrow["{\mathrm{src}}"', shift right=1, from=1-2, to=1-1]
    \arrow["{\mathrm{tgt}}", shift left=1, from=1-2, to=1-1]
    \arrow["{\mathrm{sgn}}", from=1-2, to=1-3]
  \end{tikzcd}.
\end{equation*}
A signed graph as in \cref{def:sgn-graph} is equivalent to a functor
$X: \Sch{\SgnGraph} \to \Set$ such that $X(A) = \Sgn$, the set of signs, and
$X(\negate): \Sgn \to \Sgn$ is negation (i.e., multiplication by $-1$). Note
that negation is not needed to define the data of a signed graph but is relevant
to the data migration. A morphism of signed graphs $X \to Y$ is a natural
transformation $\phi: X \to Y$ whose component at $A$ is the identity function,
$\phi_A = 1_\Sgn$. We thus obtain a category isomorphic to $\SgnGraph$.

Similarly, the \define{schema for Petri nets with signed links} is the category
$\Sch{\SgnPetri}$ freely generated by:
\begin{equation*}
  \begin{tikzcd}
    & S \\
    I & O & L & A \arrow[loop right, "\mathrm{neg}", looseness=7] \\
    & T
    \arrow["{\mathrm{src}_L}"', from=2-3, to=1-2]
    \arrow["{\mathrm{tgt}_L}", from=2-3, to=3-2]
    \arrow["{\mathrm{src}_I}", from=2-1, to=1-2]
    \arrow["{\mathrm{tgt}_I}"', from=2-1, to=3-2]
    \arrow["{\mathrm{tgt}_O}"{pos=0.3}, from=2-2, to=1-2]
    \arrow["{\mathrm{src}_O}"'{pos=0.3}, from=2-2, to=3-2]
    \arrow["{\mathrm{sgn}}", from=2-3, to=2-4]
    \arrow["{\mathrm{one}}"', curve={height=18pt}, from=3-2, to=2-4]
  \end{tikzcd}.
\end{equation*}
A Petri net with signed links, as in \cref{def:sgn-petri}, is equivalent to a
functor $P: \Sch{\SgnPetri} \to \Set$ such that $P(A) = \Sgn$, the map
$P(\negate): \Sgn \to \Sgn$ is negation, and $P(\one): P(T) \to \Sgn$ is the
constant map at $+1$. Again, these maps are needed for data migration, not for
the data itself. A morphism of Petri nets with signed links is a natural
transformation $\phi: P \to Q$ such $\phi_A = 1_\Sgn$, yielding a category
isomorphic to $\SgnPetri$.

With these preliminaries, we can construct the ``interactions functor,'' called
``$\Int$'' for short, from Petri nets with signed links to signed graphs.

\begin{proposition}[Interactions functor]
  \label{prop:sgn-petri-to-sgn-graph}
  A functor
  \begin{equation*}
    \Int: \SgnPetri \to \SgnGraph
  \end{equation*}
  is specified by the following functor from $\Sch{\SgnGraph}$ to the category
  of duc-queries on $\Sch{\SgnPetri}$:
  \begin{equation} \label{eq:sgn-petri-to-sgn-graph}
    \begin{aligned}
      \Sch{\SgnGraph} &\to \amalg\left(\left(\Set^{\Sch{\SgnPetri}}\right)^{\op}\right)\\
      V &\mapsto S \\
      E &\mapsto I \times_T O + I + I \times_T L + O \times_T L \\
      A &\mapsto A \\
      \src &\mapsto \left[
        \src_I \circ \pi_I,\ \src_I,\ \src_L \circ \pi_L,\ \src_L \circ \pi_L
      \right] \\
      \tgt &\mapsto \left[
        \tgt_O \circ \pi_O,\ \src_I,\ \src_I \circ \pi_I,\ \tgt_O \circ \pi_O
      \right] \\
      \sgn &\mapsto \left[
        \one \circ \pi_T,\ \negate \circ \one \circ \tgt_I,\
        \negate \circ \sgn \circ \pi_L,\ \sgn \circ \pi_L
      \right] \\
      \negate &\mapsto \negate.
    \end{aligned}
  \end{equation}
\end{proposition}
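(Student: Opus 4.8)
The plan is to read the assignment in \cref{eq:sgn-petri-to-sgn-graph} as a single functor
\[
  F\colon \Sch{\SgnGraph} \to \amalg\left(\left(\Set^{\Sch{\SgnPetri}}\right)^{\op}\right)
\]
and then to invoke the classification of familially representable (duc-query) functors to produce $\Int$ from it. The first point to record is that $\Sch{\SgnGraph}$ is \emph{freely} generated by the objects $V,E,A$ and the arrows $\src,\tgt,\sgn,\negate$; hence a functor out of it is uniquely determined by arbitrary images of these generators, with no relations to check. The only obligation is therefore that each generating arrow be sent to a morphism of the duc-query category with matching domain and codomain. To see this, I would first unwind the query notation: a summand such as $I\times_T O$ names the pattern copresheaf $y(I)+_{y(T)}y(O)$, so that evaluating the corresponding conjunctive query at a net $P$ returns $\Hom(y(I)+_{y(T)}y(O),P)\cong P(I)\times_{P(T)}P(O)$, the set of input--output pairs sharing a transition; the other three summands are read analogously, with evident projections $\pi_I,\pi_O,\pi_L,\pi_T$ out of each pullback pattern. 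I would then check that composites like $\src_I\circ\pi_I$, $\negate\circ\one\circ\tgt_I$, and $\one\circ\pi_T$ are well-defined morphisms of patterns, and that, component by component, the four entries of the $\src$- and $\tgt$-rows assemble into morphisms $F(E)\to F(V)$, the $\sgn$-row into a morphism $F(E)\to F(A)$, and $\negate\mapsto\negate$ into a morphism $F(A)\to F(A)$.

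With $F$ in hand, I would appeal to the theory of familially representable functors between copresheaf categories \cite[Appendix C]{leinster2004}, in Spivak's data-migration formulation \cite{spivak2021}: any functor $\cat{D}\to\amalg((\Set^{\cat{C}})^{\op})$ induces a functor $\Set^{\cat{C}}\to\Set^{\cat{D}}$, whose value at an object $d$ with $F(d)=\coprod_i P_i^d$ is the disjoint union of conjunctive queries $P\mapsto\coprod_i\Hom(P_i^d,P)$, functorial in $d$ via the morphism part of $F$. Unwinding this induced functor $\Set^{\Sch{\SgnPetri}}\to\Set^{\Sch{\SgnGraph}}$ gives as vertices the set $P(S)$ (the species), as edges the disjoint union of the four query families, and as $A$-component $P(A)$; a summand-by-summand comparison with the $\src$-, $\tgt$-, and $\sgn$-rows of \cref{eq:sgn-petri-to-sgn-graph} then confirms that these four families reproduce exactly the four edge types (a)--(d) described before the statement.

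It remains to verify that this functor restricts to a functor $\Int\colon\SgnPetri\to\SgnGraph$ between the signed slice categories. Because $F(A)=A=y(A)$ and $F(\negate)=\negate$, for a Petri net with signed links $P$---for which $P(A)=\Sgn$ and $P(\negate)$ is negation---the output has $A$-component $\Sgn$ and sends $\negate$ to negation, which is exactly the data distinguishing signed graphs among copresheaves on $\Sch{\SgnGraph}$. The constraint $P(\one)\equiv +1$ together with the $\sgn$-row ensures that each of the four edge families receives the intended sign. On morphisms, a morphism of $\SgnPetri$ has identity $A$-component; since $F$ fixes $A$, the induced natural transformation again has identity $A$-component and so is a morphism of signed graphs.

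The existence and functoriality of $\Int$ thus follow formally once the duc-query datum $F$ is in place, and the only genuine work is the bookkeeping of the first step: I expect the main obstacle to be checking that the pullback patterns and their projections are specified correctly, so that the four query families match edge types (a)--(d) and the $\sgn$-component really does assemble into a single well-defined natural transformation carrying the per-summand signs. Everything else is an application of the cited data-migration machinery.
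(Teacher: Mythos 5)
Your proposal is correct and follows essentially the same route as the paper's proof: interpret \cref{eq:sgn-petri-to-sgn-graph} as a functor into the duc-query category, invoke the data-migration/familial-representability machinery to obtain a parametric right adjoint $\Set^{\Sch{\SgnPetri}} \to \Set^{\Sch{\SgnGraph}}$, and then use $A \mapsto A$, $\negate \mapsto \negate$ to restrict to the signed slice categories. You additionally spell out two details the paper leaves implicit---that the freeness of $\Sch{\SgnGraph}$ means no relations need checking, and that identity $A$-components are preserved on morphisms---both of which are accurate.
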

\begin{proof}
  We will define the functor $\Int: \SgnPetri \to \SgnGraph$ as the restriction of
  a functor $\Set^{\cat{D}} \to \Set^{\cat{C}}$ between the categories of
  copresheaves on $\cat{D} \coloneqq \Sch{\SgnPetri}$ and
  $\cat{C} \coloneqq \Sch{\SgnGraph}$. In fact, the functor
  $\Set^{\cat{D}} \to \Set^{\cat{C}}$ is of the special kind known as a
  \emph{parametric right adjoint} \cite[Definition p.\ 311]{street2000} or a
  \emph{familially representable functor} \cite[Appendix C]{leinster2004}.

  According to the theory of data migration \cite[Corollary 2.3.6]{spivak2021},
  giving a parametric right adjoint $\Set^{\cat{D}} \to \Set^{\cat{C}}$ is
  equivalent to giving a functor from $\cat{C}$ to
  $\amalg((\Set^{\cat{D}})^\op)$, the free coproduct completion of the free
  limit completion of $\cat{D}$. Our functor
  $\cat{C} \to \amalg((\Set^{\cat{D}})^\op)$ is defined by
  \cref{eq:sgn-petri-to-sgn-graph}. The assignment of $E \in \cat{C}$ can also
  be described as the sum of the four representables associated with the Petri
  nets with links in \cref{fig:petri-queries}.
  
  Finally, the assignments $A \mapsto A$ and $\negate \mapsto \negate$ ensure
  that if $P$ is a copresheaf on $\cat{D}$ with $P(A) = \Sgn$ and $P(\negate)$
  is negation, then applying this parametric right adjoint functor to $P$ yields
  a copresheaf $X$ on $\cat{C}$ where again $X(A) = \Sgn$ and $X(\negate)$ is
  negation. Thus, this functor between copresheaf categories restricts to a
  functor $\SgnPetri \to \SgnGraph$ as claimed.
\end{proof}

Using the interactions functor, we can give a formal account of what it means to
have a mechanistic model for a regulatory network.

\begin{definition}[Mechanism]
  A \define{mechanistic model} for a regulatory network $X$ is a Petri net with
  signed links $P$ together with an occurrence of $X$ in $\Int(P)$, i.e., a
  monic signed functor $X \monicto \Int(P)$.
\end{definition}

For example, the Petri net with signed links in \cref{eq:voit-ex-sgn-petri} is a
mechanistic model for a regulatory network in which $A$ and $D$ participate in a
positive feedback loop:
\begin{equation*}
  \begin{tikzcd}
    A & D
    \arrow[curve={height=6pt}, from=1-1, to=1-2]
    \arrow[curve={height=6pt}, from=1-2, to=1-1]
  \end{tikzcd}.
\end{equation*}

The next example is inspired by activation and inhibition of enzymes
\cite[Chapter 3]{ingalls2013}. Besides being of greater biological interest, it
shows why Petri nets with links are better suited than bare Petri nets to
describe mechanisms.

\begin{example}[Enzyme inhibition]
  Suppose that an enzyme $E$ catalyzes the conversion of a substrate $S$ into a
  product $P$, and that an inhibiting agent $I$ transforms the chemical
  configuration of the reactive enzyme $E$ into the inert compound $F$. Then the
  agent $I$ inhibits the product $P$:
  \begin{equation*}
    \begin{tikzcd}
      I & P
      \arrow[maps to, no head, from=1-2, to=1-1]
    \end{tikzcd}.
  \end{equation*}
  As a first attempt to model these interactions, consider the following Petri
  net without links.
  \begin{equation*}
    \begin{tikzpicture}[baseline={(current bounding box.center)}]
      \node[species] (I) at (0,1) {$I$};
      \node[species] (E) at (-1,0) {$E$};
      \node[species] (S) at (-2,-1) {$S$};
      \node[transition] (in1) at (0,0) {};
      \node[transition] (in2) at (-1,-1) {};
      \node[species] (F) at (1,0) {$F$};
      \node[species] (P) at (0,-1) {$P$};
      \draw[arc] (in1) to (F);
      \draw[arc, bend right=45] (in1.east) to (I);
      \draw[arc] (in2) to (P);
      \draw[arc, bend right=45] (I) to (in1.west);
      \draw[arc] (E) to (in1);
      \draw[arc, bend right=45] (E) to (in2.west);
      \draw[arc] (S) to (in2);
      \draw[arc, bend right=45] (in2.east) to (E);
    \end{tikzpicture}
  \end{equation*}
  One might expect this Petri net to provide a mechanism for the inhibition of
  $P$ by $I$, but at least according to our definition, it does not. Applying
  the functor $\Int: \SgnPetri \to \SgnGraph$ yields the regulatory network
  \begin{equation*}
    \begin{tikzcd}
      {} & I \arrow[loop below, maps to, no head, looseness=7] \arrow[loop above, looseness=7] & {}\\
      E \arrow[loop below, maps to, no head, looseness=7] \arrow[loop above, looseness=7] & {} & F \\
      {} & S \arrow[loop below, maps to, no head, looseness=7] & P
      \arrow[from=1-2, to=2-3]
      \arrow[from=2-1, to=1-2]
      \arrow[from=2-1, to=2-3]
      \arrow[from=2-1, to=3-3]
      \arrow[from=3-2, to=2-1]
      \arrow[from=3-2, to=3-3]
    \end{tikzcd}
  \end{equation*}
  in which there is no occurrence of the desired inhibition. If instead we allow
  links, the reaction catalyzed by $E$ should be written as having input $S$,
  output $P$, and a link from $E$ to the transition. A similar change is
  required for the reaction involving $E$, $F$, and $I$. A better model of the
  reactions is thus the following Petri net with links.
  \begin{equation*}
    \begin{tikzpicture}[baseline={(current bounding box.center)}]
      \node[species] (I) at (3,0) {$I$};
      \node[species] (E) at (-1,0) {$E$};
      \node[species] (S) at (-5,0) {$S$};
      \node[transition] (in1) at (0,0) {};
      \node[transition] (in2) at (-4,0) {};
      \node[species] (F) at (1,0) {$F$};
      \node[species] (P) at (-3,0) {$P$};
      \draw[arc] (in1) to (F);
      \draw[arc] (in2) to (P);
      \draw[arc] (E) to (in1);
      \draw[arc] (S) to (in2);
      \draw[link, bend right=45] (E.north) to node[midway,above] {$+$} (in2.north);
      \draw[link, bend right=45] (I.north) to node[midway,above] {$+$} (in1.north);
    \end{tikzpicture}
  \end{equation*}
  Applying the functor $\Int$ now yields the network
  \begin{equation*}
    \begin{tikzcd}
      {} & I & {}\\
      E \arrow[loop below, maps to, no head, looseness=7] & {} & F \\
      {} & S \arrow[loop below, maps to, no head, looseness=7] & P
      \arrow[from=1-2, to=2-3]
      \arrow[from=2-1, to=2-3]
      \arrow[from=2-1, to=3-3]
      \arrow[from=3-2, to=3-3]
      \arrow[maps to, no head, from=2-1, to=1-2]
      \arrow[maps to, no head, from=3-2, to=2-1]
    \end{tikzcd}
  \end{equation*}
  which indeed has an occurrence of $I$ inhibiting $P$.
\end{example}

\section{Quantitative analysis: parameters and dynamics}
\label{sec:quantitative}

\subsection{Parameterized dynamical systems}
\label{sec:para-dynam}

Pioneering the idea of functorial semantics for scientific models, Baez and
Pollard extended the mass-action model of reaction networks to a functor from
the category of Petri nets with rates into a category of dynamical systems
\cite{baez2017}. In this picture, the reaction rate coefficients are known
constants associated with the reaction network. In practice, however, rate
coefficients are often unknown and must be extracted from existing literature or
estimated from experimental data. We therefore change our perspective slightly
and consider dynamical systems not in isolation but as \emph{parameterized
  families}. This shift also turns out to have formal advantages: the category
of parameterized dynamical systems is better behaved than the category of
dynamical systems, which has too few morphisms.

To begin, we recall the dynamics functor, nearly identical to Baez-Pollard's
\cite[Lemma 15]{baez2017}:

\begin{lemma}[Dynamics] \label{lem:dynamics}
  There is a functor $\Dynam: \FinSet \to \Vect$ that sends
  \begin{itemize}
    \item a finite set $S$ to the vector space of algebraic vector fields
      $v: \R^S \to \R^S$, where \define{algebraic} means that the components of
      the vector field are polynomials in the state variables;
    \item a function $f: S \to S'$ between finite sets to the linear
      transformation
      \begin{equation*}
        (v: \R^S \to \R^S) \quad\mapsto\quad
        (f_* \circ v \circ f^*: \R^{S'} \to \R^{S'}),
      \end{equation*}
      where the linear map $f^*: \R^{S'} \to \R^S$ is the \define{pullback}
      along $f$
      \begin{equation*}
        f^*(x')(i) \coloneqq x'(f(i)),
        \qquad x' \in \R^{S'},\ i \in S,
      \end{equation*}
      and the linear map $f_*: \R^S \to \R^{S'}$ is the \define{pushforward} along $f$
      \begin{equation*}
        f_*(x)(i') \coloneqq \sum_{i \in f^{-1}(i')} x(i),
        \qquad x \in \R^S,\ i' \in S'.
      \end{equation*}
  \end{itemize}
\end{lemma}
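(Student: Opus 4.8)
The plan is to verify the three defining properties of a functor in turn: that the object assignment yields a vector space, that the morphism assignment yields a well-defined linear map between the appropriate spaces, and that the assignment preserves identities and composition. The structural observation that organizes the whole argument is that $f \mapsto f^*$ defines a contravariant functor and $f \mapsto f_*$ a covariant functor, both with object part $S \mapsto \R^S$; the operator $\Dynam(f)$ merely conjugates a vector field by these two, precomposing with $f^*$ and postcomposing with $f_*$. Once this is recognized, functoriality follows formally.

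First I would confirm that the object map is well defined: for each finite set $S$, the algebraic vector fields $v: \R^S \to \R^S$ form a vector space under pointwise operations, since a sum or scalar multiple of maps with polynomial components again has polynomial components. Next, for a function $f: S \to S'$, I would check that $f_* \circ v \circ f^*$ is again algebraic. The maps $f^*$ and $f_*$ are linear, hence have polynomial components, so composing them with the polynomial map $v$ produces a map $\R^{S'} \to \R^{S'}$ whose components are again polynomials in the state variables. Linearity of $v \mapsto f_* \circ v \circ f^*$ in its argument is immediate, as postcomposition with the linear map $f_*$ and precomposition with the fixed map $f^*$ are linear operations on the space of vector fields.

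Functoriality then reduces to two elementary identities. For identities, when $f = \id_S$ the defining formulas give $f^* = \id$ and $f_* = \id$, so $\Dynam(\id_S)(v) = v$. For composition, given $f: S \to S'$ and $g: S' \to S''$, I would establish the contravariance $(g \circ f)^* = f^* \circ g^*$ and the covariance $(g \circ f)_* = g_* \circ f_*$ by unwinding the definitions on components. Combining these yields
\begin{align*}
  \Dynam(g)\bigl(\Dynam(f)(v)\bigr)
    &= g_* \circ (f_* \circ v \circ f^*) \circ g^* \\
    &= (g \circ f)_* \circ v \circ (g \circ f)^*
    = \Dynam(g \circ f)(v),
\end{align*}
as required.

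I do not anticipate a serious obstacle, since the whole verification is routine and the result is essentially Baez--Pollard's dynamics functor \cite{baez2017}. The one step requiring care is the pushforward composition identity $(g \circ f)_* = g_* \circ f_*$, where the fibers must be tracked correctly. Writing out the iterated sum $\sum_{i' \in g^{-1}(i'')} \sum_{i \in f^{-1}(i')} x(i)$ and comparing it to the single sum $\sum_{i \in (g \circ f)^{-1}(i'')} x(i)$ makes the agreement transparent, since the fibers of $g \circ f$ partition into fibers of $f$ indexed by the fibers of $g$.
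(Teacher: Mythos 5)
Your proof is correct, and every step you flag as needing care (the fiber decomposition behind $(g \circ f)_* = g_* \circ f_*$, the contravariance of $(-)^*$, preservation of algebraicity under pre- and postcomposition with linear maps) does go through as you describe. However, the paper takes a different, more structural route: rather than verifying functoriality by hand, it factors $\Dynam$ as the composite
\begin{equation*}
  \FinSet \xrightarrow{\langle D, F \rangle} \Vect^\op \times \Vect \xrightarrow{\Poly(-,-)} \Vect,
\end{equation*}
where $F$ is the free vector space functor $S \mapsto \R^S$ (whose action on morphisms is exactly your pushforward $f_*$), $D$ is the dual functor $\Set(-,\R)$ (whose action is exactly your pullback $f^*$), and $\Poly(-,-)$ is the $\Vect$-valued hom-bifunctor of polynomial maps. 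Functoriality is then automatic, being inherited from the three constituent functors, and the conjugation formula $v \mapsto f_* \circ v \circ f^*$ falls out as the action of the hom-bifunctor on the pair $(D(f), F(f))$. Your organizing observation --- that $f \mapsto f^*$ is contravariant, $f \mapsto f_*$ is covariant, and $\Dynam(f)$ merely conjugates --- is precisely the content of this factorization, so the two proofs encode the same mathematics. What your version buys is self-containedness and an explicit record of the component-level identities; what the paper's version buys is brevity and the recognition that $\Dynam$ is an instance of a standard pattern (a hom-bifunctor precomposed with a span of functors), which also sets up the nonnegative variant in Lemma~\ref{lem:nonneg-dynam} with minimal extra work.
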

\begin{proof}
  The functor $\Dynam: \FinSet \to \Vect$ can be constructed as the composite
  functor
  \begin{equation*}
    \FinSet \xrightarrow{\langle D, F \rangle}
      \Vect^\op \times \Vect \xrightarrow{\Poly(-,-)}
      \Vect,
  \end{equation*}
  where $F: \FinSet \to \Vect,\ X \mapsto \R^X$ is the free vector space functor
  (restricted to finite sets); $D: \FinSet^\op \to \Vect$ is the dual vector
  space functor (restricted to $F$), whose underlying set-valued functor is
  \begin{equation*}
    \Vect(F(-), \R) \cong \Set(-, \R): \FinSet^\op \to \Set;
  \end{equation*}
  and $\Poly(-,-)$ is the $\Vect$-valued hom-functor that sends a pair of vector
  spaces to the vector space of polynomial functions between them.\footnote{For
    a coordinate-free description of polynomial maps between vector spaces, see
    \cite[\S 1.6]{cartan1971}.}
\end{proof}

The dynamics functor is the same one studied by Baez and Pollard except that we
take the vector space, rather than merely the set, of vector fields. That is
because we are interested in \emph{linearly} parameterized dynamical systems. In
calling the functor ``dynamics,'' we implictly identify a vector field with the
differentiable dynamical system that it generates. This common practice is not
entirely innocent since even when a system of differential equations depends
linearly on parameters, its solutions rarely do. We also note that the
restriction to \emph{algebraic} vector fields, as opposed to smooth or even just
continuous ones, is inessential but suffices for us and agrees with
Baez-Pollard.

The dynamics functor is the main building block in constructing a category of
parameterized dynamical systems.

\begin{definition}[Linear parameterizations] \label{def:para-dynam}
  The \define{category of linearly parameterized dynamical systems} is the comma
  category
  \begin{equation*}
    \LinParaDynam \coloneqq F/\Dynam,
  \end{equation*}
  where $F: \FinSet \to \Vect,\ X \mapsto \R^X$ is the free vector space functor
  restricted to finite sets.
\end{definition}

So, by definition, a linearly parameterized dynamical system consists of a
finite set $P$, the \define{parameter variables}, and a finite set $S$, the
\define{state variables}, together with a linear map
\begin{equation*}
  v: \R^P \to \Dynam(S)
\end{equation*}
sending each choice of parameters $\theta \in \R^P$ to an algebraic vector field
$v(\theta)$. In more conventional notation, we can write
$v(x; \theta) \coloneqq v(\theta)(x)$ for $x \in \R^S$ and $\theta \in \R^P$. A
morphism $(P,S,v) \to (P',S',v')$ of linearly parameterized dynamical systems is
a pair of functions $q: P \to P'$ and $f: S \to S'$ making the square
\begin{equation} \label{eq:para-dynam-hom}
  \begin{tikzcd}
    {\R^P} & {\Dynam(S)} \\
    {\R^{P'}} & {\Dynam(S')}
    \arrow["v", from=1-1, to=1-2]
    \arrow["{f_* \circ (-) \circ f^*}", from=1-2, to=2-2]
    \arrow["{v'}"', from=2-1, to=2-2]
    \arrow["{q_*}"', from=1-1, to=2-1]
  \end{tikzcd}
\end{equation}
commute, or equivalently making the equation
\begin{equation*}
  f_*(v(f^*(x'); \theta)) = v'(x'; q_*(\theta))
\end{equation*}
hold for all $x' \in \R^{S'}$ and $\theta \in \R^P$.

While certainly not all dynamical models depend linearly on their parameters, a
great many of them do, including several important canonical models in biology
and chemistry. The law of mass action defines dynamical systems that depend
linearly on the rate coefficients. The generalized Lotka-Volterra equations,
studied in the next section, are linear in the rate and affinity parameters. Of
course, the mass-action and Lotka-Volterra equations are nonlinear ODEs;
linearity of a vector field in state or in parameters are separate matters.
Nevertheless, even for nonlinear models such as Lotka-Volterra, linearity in
parameters is a useful assumption that aides in the identifiability analysis of
the model \cite[\S 5]{stanhope2014}.

To express important physical constraints and to define a semantics for signed
graphs, we will restrict the dynamical system and its parameters to be
nonnegative. This is straightforward enough but requires a bit of additional
formalism.

Write $\R_+ \coloneqq \{x \in \R: x \geq 0\}$ for the semiring of nonnegative
real numbers. A module over $\R_+$ is called a \define{conical space}, and the
category of conical spaces and conic-linear ($\R_+$-linear) maps is denoted
$\Con \coloneqq \Mod{\R_+}$. A conical space is a structure in which one can
take linear combinations with nonnegative real coefficients, just as a real
vector space ($\R$-module) is a structure in which one can take linear
combinations with arbitrary real coefficients. Any convex cone in a real vector
space is a conical space. Our main example is the \define{nonnegative orthant}
of $\R^S$ for some set $S$: the function space
$\R_+^S \coloneqq \{x: S \to \R_+\}$, with conical combinations taken pointwise.
A real vector space can itself be regarded as a conical space; more precisely,
the inclusion of semirings $\R_+ \hookrightarrow \R$ induces a forgetful functor
$\Vect \to \Con$ by restriction of scalars.

Recall that a dynamical system is \define{nonnegative} if whenever the initial
condition is in the nonnegative orthant, its trajectory always remains in the
nonnegative orthant. A dynamical system of form $\dot x = v(x)$ is nonnegative
if and only if $v_i(x) \geq 0$ whenever $x \geq 0$ componentwise and $x_i = 0$
\cite[Proposition 2.1]{haddad2010}, in which case the vector field $v$ is called
\define{essentially nonnegative} \cite[Definition 2.1]{haddad2010}. Using this
criterion, it is easy to see that a reaction network with mass-action kinetics
is nonnegative assuming the rate constants are nonnegative, as is a
Lotka-Volterra system for any choice of parameters. Hence both systems satisfy
the obvious physical constraint that no species should have negative
concentration or population.

\begin{lemma}[Nonnegative dynamics] \label{lem:nonneg-dynam}
  There is a functor $\Dynam_+: \FinSet \to \Con$ that sends a finite set $S$ to
  the conical space of essentially nonnegative, algebraic vector fields
  $v: \R^S \to \R^S$ and sends a function $f: S \to S'$ to the transformation
  $v \mapsto f_* \circ v \circ f^*$.
\end{lemma}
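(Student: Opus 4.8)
The plan is to obtain $\Dynam_+$ as a restriction of the functor $\Dynam \colon \FinSet \to \Vect$ of \cref{lem:dynamics}: inside each value $\Dynam(S)$ I would carve out the essentially nonnegative vector fields, and reinterpret the target along the restriction-of-scalars functor $\Vect \to \Con$ induced by $\R_+ \hookrightarrow \R$. Functoriality will then be inherited from $\Dynam$, provided two things check out: that the object assignment lands in $\Con$, and that the morphism assignment respects essential nonnegativity.

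First I would verify that, for each finite set $S$, the essentially nonnegative algebraic vector fields form an $\R_+$-submodule of $\Dynam(S)$ regarded as a conical space. This amounts to closure under conical combinations: if $v, w$ are essentially nonnegative and $a, b \in \R_+$, then for $x \in \R_+^S$ with $x_i = 0$ we have $(av + bw)_i(x) = a\,v_i(x) + b\,w_i(x) \ge 0$, and the zero field is trivially essentially nonnegative. I would emphasize here that this subset is \emph{not} closed under negation, which is precisely why the correct target is $\Con$ rather than $\Vect$. This fixes the object assignment $\Dynam_+(S)$.

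The main step, and the one I expect to be the real content, is to verify that the linear map $\Dynam(f) = f_* \circ (-) \circ f^*$ restricts to a map $\Dynam_+(S) \to \Dynam_+(S')$; that is, that $w \coloneqq f_* \circ v \circ f^*$ is essentially nonnegative whenever $v$ is. I would argue as follows. Fix $x' \in \R_+^{S'}$ with $x'_{i'} = 0$ and set $x \coloneqq f^*(x') = x' \circ f$. Then $x \ge 0$, and $x_i = x'(f(i)) = x'(i') = 0$ for every $i \in f^{-1}(i')$. Expanding the pushforward gives
\begin{equation*}
  w_{i'}(x') = \bigl(f_*(v(x))\bigr)(i') = \sum_{i \in f^{-1}(i')} v_i(x),
\end{equation*}
and each summand is nonnegative by the essential nonnegativity of $v$ at the coordinate $i$ (since $x \ge 0$ and $x_i = 0$). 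Hence $w_{i'}(x') \ge 0$, so $w \in \Dynam_+(S')$; that $w$ is algebraic is inherited from $\Dynam$.

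It remains to observe that the restricted map is conic-linear, which is immediate since $\Dynam(f)$ is $\R$-linear and hence in particular $\R_+$-linear, and that identities and composites are preserved, which follows directly from the functoriality of $\Dynam$ already established in \cref{lem:dynamics}. Together these give the desired functor $\Dynam_+ \colon \FinSet \to \Con$.
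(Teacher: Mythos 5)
Your proposal is correct and follows essentially the same route as the paper's proof: closure of essentially nonnegative fields under conical combinations gives the object assignment, and the key computation — fixing $x' \in \R_+^{S'}$ with $x'(i') = 0$, noting $f^*(x')(i) = 0$ for all $i \in f^{-1}(i')$, and summing the nonnegative values $v(f^*(x'))(i)$ over the fiber — is exactly the argument in the paper. The additional remarks on conic-linearity and inherited functoriality are implicit in the paper's appeal to the proof of \cref{lem:dynamics} and are fine to make explicit.
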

\begin{proof}
  The proof is similar to that of \cref{lem:dynamics}. It is clear that the
  essentially nonnegative functions are stable under pointwise conical
  combinations, hence form a conical space. (They are, of course, \emph{not}
  stable under arbitrary linear combinations.) We just need to check that if
  $v: \R^S \to \R^S$ is essentially nonnegative, then so is the transformed
  vector field $f_* \circ v \circ f^*: \R^{S'} \to \R^{S'}$. Fix
  $x' \in \R_+^{S'}$ and $i' \in S'$, and suppose that $x'(i') = 0$. For every
  $i \in f^{-1}(i')$, we have $f^*(x')(i) = x'(f(i)) = x'(i') = 0$ and so
  $v(f^*(x'))(i) \geq 0$, whence the inequality of essential nonnegativity
  follows:
  \begin{equation*}
    (f_* \circ v \circ f^*)(x')(i')
      = \sum_{i \in f^{-1}(i')} v(f^*(x'))(i)
      \geq 0. \qedhere
  \end{equation*}
\end{proof}

We now define the conical analogue of linearly parameterized dynamical systems.

\begin{definition}[Conical parameterizations] \label{def:para-nonneg-dynam}
  The \define{category of conically parameterized nonnegative dynamical systems}
  is the comma category
  \begin{equation*}
    \ConParaDynam \coloneqq F_+ / \Dynam_+.
  \end{equation*}
  where $F_+: \FinSet \to \Con$, $X \mapsto \R_+^X$ is the free conical space
  functor restricted to finite sets.
\end{definition}

So, a conically parameterized nonnegative dynamical system consists of finite
sets $P$ and $S$ together with a conic-linear map
\begin{equation*}
  v: \R_+^P \to \Dynam_+(S).
\end{equation*}

\begin{proposition}[Colimits of parameterized dynamical systems]
  \label{prop:colim-para-dynam}
  The categories of linearly and conically parameterized dynamical systems are
  finitely cocomplete. Moreover, these finite colimits are computed by colimits
  in $\FinSet$ of the parameter variables and of the state variables.
\end{proposition}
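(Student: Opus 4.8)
The plan is to treat both statements as instances of a single fact about comma categories: if $F\colon \cat{A}\to\cat{C}$ and $G\colon \cat{B}\to\cat{C}$ are functors, if $\cat{A}$ and $\cat{B}$ have colimits of some shape $J$, and if $F$ preserves them, then the comma category $F/G$ has colimits of shape $J$, computed componentwise in $\cat{A}$ and $\cat{B}$. I would either cite this standard fact or, since it is short, prove it by the explicit construction below. Note that only the left leg $F$ is constrained; the right leg $G$ (here $\Dynam$ or $\Dynam_+$) need only be functorial, so its failure to preserve colimits is irrelevant.

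For the linear case, take $\cat{A}=\cat{B}=\FinSet$, $\cat{C}=\Vect$, let $F$ be the free vector space functor, and $G=\Dynam$. Given a finite diagram $D\colon J\to\LinParaDynam$ with $D(j)=(P_j,S_j,v_j)$ and transition maps $(q,f)$, I would first form $P\coloneqq\operatorname{colim}_j P_j$ and $S\coloneqq\operatorname{colim}_j S_j$ in $\FinSet$, with coprojections $\alpha_j$ and $\beta_j$. Since $F$ preserves this colimit, $F(P)\cong\operatorname{colim}_j F(P_j)$ with coprojections $F(\alpha_j)$. The composites $\Dynam(\beta_j)\circ v_j\colon F(P_j)\to\Dynam(S)$ form a cocone under the diagram $F(P_\bullet)$: the commuting square of \cref{eq:para-dynam-hom} gives $\Dynam(f)\circ v_j=v_{j'}\circ F(q)$, and $\beta_{j'}\circ f=\beta_j$, so $\Dynam(\beta_{j'})\circ v_{j'}\circ F(q)=\Dynam(\beta_j)\circ v_j$. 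The universal property of $F(P)$ then yields a unique linear map $v\colon F(P)\to\Dynam(S)$ with $v\circ F(\alpha_j)=\Dynam(\beta_j)\circ v_j$. One checks that $(P,S,v)$ with coprojections $(\alpha_j,\beta_j)$ is the colimit, its universal property descending from those of $P$ and $S$ in $\FinSet$. This establishes both existence and the ``moreover'' clause that the colimit is computed by colimits of the parameter and state variables.

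The one substantive hypothesis to verify is that $F$ preserves finite colimits, and this is where I expect the (modest) real work to sit, because $F\colon\FinSet\to\Vect$ is \emph{not} literally a left adjoint---there is no right adjoint valued in $\FinSet$. I would instead factor it as $\FinSet\hookrightarrow\Set\to\Vect$, where the second functor is the free vector space functor on all sets. The inclusion preserves finite colimits (finite colimits of finite sets are again finite and agree with those computed in $\Set$), and the free functor is a genuine left adjoint, hence preserves all colimits; the composite therefore preserves finite colimits. This also explains why we claim only \emph{finite} cocompleteness: $\FinSet$ itself has only finite colimits.

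The conical case is identical after replacing $\Vect$ by $\Con=\Mod{\R_+}$, $\Dynam$ by $\Dynam_+$, and $F$ by the free conical space functor $F_+$. The same factorization $\FinSet\hookrightarrow\Set\to\Con$ shows $F_+$ preserves finite colimits, and the induced structure map is built exactly as above, now invoking \cref{lem:nonneg-dynam} to ensure that the assembled vector field remains essentially nonnegative and so genuinely lands in $\Dynam_+(S)$.
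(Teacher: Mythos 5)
Your proposal is correct and takes essentially the same route as the paper: the paper also reduces the statement to a general lemma on colimits in comma categories $F_0/F_1$ when the left leg preserves the relevant colimits (\cref{lem:colim-comma-cat}, whose proof is the explicit construction you give), and it verifies that $F$ and $F_+$ preserve finite colimits by the same factorization $\FinSet \hookrightarrow \Set \to \Vect$ (resp.\ $\Con$) through the genuine left adjoints. The only difference is presentational---you inline the comma-category construction where the paper separates it out as a lemma reused later.
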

\begin{proof}
  The category $\FinSet$ has finite colimits and the functors
  $F: \FinSet \to \Vect$ and $F_+: \FinSet \to \Con$ preserve finite colimits,
  since they are composites of the inclusion $\FinSet \hookrightarrow \Set$ with
  the left adjoints
  \begin{equation*}
    \begin{tikzcd}
      \Set & \Vect
      \arrow[""{name=0, anchor=center, inner sep=0}, "F", curve={height=-12pt}, from=1-1, to=1-2]
      \arrow[""{name=1, anchor=center, inner sep=0}, "U", curve={height=-12pt}, from=1-2, to=1-1]
      \arrow["\dashv"{anchor=center, rotate=-90}, draw=none, from=0, to=1]
    \end{tikzcd}
    \qquad\text{and}\qquad
    \begin{tikzcd}
      \Set & \Con
      \arrow[""{name=0, anchor=center, inner sep=0}, "{F_+}", curve={height=-12pt}, from=1-1, to=1-2]
      \arrow[""{name=1, anchor=center, inner sep=0}, "{U_+}", curve={height=-12pt}, from=1-2, to=1-1]
      \arrow["\dashv"{anchor=center, rotate=-90}, draw=none, from=0, to=1]
    \end{tikzcd}
  \end{equation*}
  to the underlying set functors on vector spaces and conical spaces. By
  \cref{lem:colim-comma-cat} below, the comma categories
  $\LinParaDynam = F/\Dynam$ and $\ConParaDynam = F_+/\Dynam_+$ have finite
  colimits, which are preserved by the projection functors onto $\FinSet$.
\end{proof}

To illustrate, we describe the initial object and binary coproducts in
$\LinParaDynam$. The initial linearly parameterized dynamical system has no
parameter variables, no state variables, and the unique (trivial) vector field
on the zero vector space. The coproduct of two linearly parameterized dynamical
systems $(P_1,S_1,v_1)$ and $(P_2,S_2,v_2)$ has parameter variables $P_1 + P_2$,
state variables $S_1 + S_2$, and parameterized vector field
\begin{equation*}
  \R^{P_1+P_2} \cong \R^{P_1} \oplus \R^{P_2}
    \xrightarrow{v_1 \oplus v_2}
    \Dynam(S_1) \oplus \Dynam(S_2)
    \xrightarrow{[\Dynam(\iota_1), \Dynam(\iota_2))]}
    \Dynam(S_1 + S_2),
\end{equation*}
where $\iota_1: S_1 \to S_1 + S_2$ and $\iota_2: S_2 \to S_1 + S_2$ are the
canonical inclusions. In conventional notation, the coproduct system has
parameterized vector field
\begin{equation*}
  v\left(\begin{bmatrix} x_1 \\ x_2 \end{bmatrix};
         \begin{bmatrix} \theta_1 \\ \theta_2 \end{bmatrix}\right) =
  \begin{bmatrix}
    v_1(x_1; \theta_1) \\
    v_2(x_2; \theta_2)
  \end{bmatrix}.
\end{equation*}

The proof of \cref{prop:colim-para-dynam}, as well as of
\cref{thm:lv-graph,thm:lv-sgn-graph} below, depends on the following technical
lemma about comma categories, which the reader can omit without loss of
continuity.

\begin{lemma}[Colimits in comma categories] \label{lem:colim-comma-cat}
  Let $\cat{C}_0 \xrightarrow{F_0} \cat{C} \xleftarrow{F_1} \cat{C}_1$ be a
  cospan of categories such that $\cat{C}_0$ and $\cat{C}_1$ have colimits of
  shape $\cat{J}$ and $F_0$ preserves $\cat{J}$-shaped colimits. Then the comma
  category $F_0/F_1$ also has $\cat{J}$-shaped colimits, and the projection
  functors $\pi_i: F_0/F_1 \to \cat{C}_i$, $i=0,1$, preserve those colimits.

  Furthermore, a functor $G: \cat{X} \to F_0/F_1$ into the comma category
  preserves $\cat{J}$-shaped colimits whenever the associated functors
  $G_i \coloneqq \pi_i \circ G: \cat{X} \to \cat{C}_i$, $i = 0,1$, do so.
\end{lemma}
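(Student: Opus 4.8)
The plan is to show that $\cat{J}$-shaped colimits in the comma category $F_0/F_1$ are computed componentwise, by the colimits in $\cat{C}_0$ and $\cat{C}_1$, with the connecting structure morphism supplied by the universal property of the colimit in $\cat{C}_0$ after applying $F_0$. Recall that an object of $F_0/F_1$ is a triple $(c_0,c_1,\phi)$ with $\phi: F_0(c_0)\to F_1(c_1)$ in $\cat{C}$, and a morphism $(c_0,c_1,\phi)\to(c_0',c_1',\phi')$ is a pair $(h_0,h_1)$ making the evident square commute. Given a diagram $D: \cat{J}\to F_0/F_1$, I would write $D_i \coloneqq \pi_i\circ D$ and $D(j)=(c_0^j,c_1^j,\phi^j)$, and first form the colimits $c_i \coloneqq \operatorname{colim} D_i$ in $\cat{C}_i$, with colimit cocones $(\lambda_i^j)_j$.

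The next step is to construct the structure morphism. Since $F_0$ preserves $\cat{J}$-shaped colimits, $F_0(c_0)$ together with $(F_0(\lambda_0^j))_j$ is a colimit of $F_0\circ D_0$. I would then check that the morphisms $F_1(\lambda_1^j)\circ\phi^j: F_0(c_0^j)\to F_1(c_1)$ assemble into a cocone under $F_0\circ D_0$ with apex $F_1(c_1)$, using the commuting squares that define the morphisms $D(u)$ together with the cocone identities for $\lambda_1$. The universal property of $F_0(c_0)$ then yields a unique $\phi: F_0(c_0)\to F_1(c_1)$ satisfying $\phi\circ F_0(\lambda_0^j)=F_1(\lambda_1^j)\circ\phi^j$. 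This equation is precisely the assertion that each $(\lambda_0^j,\lambda_1^j)$ is a morphism in $F_0/F_1$, so $(c_0,c_1,\phi)$ comes equipped with a cocone under $D$.

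To verify the universal property, I would take any competing cocone $(\mu_0^j,\mu_1^j): D(j)\to(d_0,d_1,\psi)$; the componentwise universal properties produce unique $f_i: c_i\to d_i$ compatible with the cocones, and the pair $(f_0,f_1)$ is a morphism in $F_0/F_1$ because the required identity $\psi\circ F_0(f_0)=F_1(f_1)\circ\phi$ can be checked after precomposing with the jointly epimorphic family $(F_0(\lambda_0^j))_j$ — once more using that $F_0$ preserves the colimit. Uniqueness of $(f_0,f_1)$ is inherited from that of $f_0$ and $f_1$. Since this colimit projects under $\pi_i$ to the component colimit $c_i$, the projections $\pi_i$ preserve $\cat{J}$-shaped colimits by construction.

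For the final clause, the key observation is that $(\pi_0,\pi_1)$ is jointly conservative: a morphism $(f_0,f_1)$ in $F_0/F_1$ is invertible exactly when $f_0$ and $f_1$ are, since the inverse pair then satisfies the inverse square automatically. Given $G:\cat{X}\to F_0/F_1$ with each $G_i=\pi_i\circ G$ preserving $\cat{J}$-colimits and a colimit cocone $\kappa^j: D(j)\to x$ in $\cat{X}$, the canonical comparison $\operatorname{colim}(G\circ D)\to G(x)$ projects under $\pi_i$ to the comparison $\operatorname{colim}(G_i\circ D)\to G_i(x)$, which is an isomorphism by hypothesis; joint conservativity then forces the comparison itself to be an isomorphism. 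The one subtle point, which I expect to be the main thing to get right, is the asymmetry of the hypotheses: only $F_0$, not $F_1$, need preserve colimits. This is essential rather than cosmetic, because the structure morphisms of objects of $F_0/F_1$ have $F_0(c_0)$ as their \emph{domain}, so it is the colimit property of $F_0(c_0)$ alone that both defines $\phi$ and verifies compatibility of induced maps, while $F_1(c_1)$ merely receives a map and need not be a colimit.
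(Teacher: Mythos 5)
Your proposal is correct and follows essentially the same route as the paper: colimits in $F_0/F_1$ are computed componentwise, with the connecting morphism induced by the universal property of $F_0(c_0)$ (exactly where the hypothesis that $F_0$, and only $F_0$, preserves $\cat{J}$-colimits enters), and the final clause reduces to this componentwise computation. The only cosmetic differences are that you verify the universal property explicitly via joint epimorphicity of the colimit legs and phrase the last clause via joint conservativity of $(\pi_0,\pi_1)$, whereas the paper delegates the universal property to a citation and argues the last clause by uniqueness of the induced structure map; both are sound.
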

\begin{proof}
  Colimits in the comma category $F_0/F_1$ are constructed in \cite[\S
  5.2]{rydeheard1988}. To make the rest of the proof self-contained, we recall
  the construction here.

  By the universal property of the comma category, a diagram
  $D: \cat{J} \to F_0/F_1$ is equivalent to diagrams
  $D_i \coloneqq \pi_i \circ D: \cat{J} \to \cat{C}_i$, $i=0,1$, along with a
  natural transformation $\vec D: F_0 \circ D_0 \To F_1 \circ D_1$. Let
  $(c_i, \lambda^i)$ be a colimit cocone for the diagram $D_i$ in $\cat{C}_i$,
  having legs $D_i(j) \xrightarrow{\lambda^i_j} c_i$ for each $j \in \cat{J}$.
  The family of morphisms
  \begin{equation*}
    F_0(D_0(j)) \xrightarrow{\vec D_j} F_1(D_1(j))
      \xrightarrow{F_1 (\lambda^1_j)} F_1(c_1),
    \qquad j \in \cat{J},
  \end{equation*}
  is then a cocone under $F_0 \circ D_0$. Since $F_0$ preserves $\cat{J}$-shaped
  limits, $(F_0(c_0), F_0 * \lambda^0)$ is a colimit cocone for $F_0 \circ D_0$,
  so by its universal property, there exists a unique morphism
  $f: F_0(c_0) \to F_1(c_1)$ making the squares commute:
  \begin{equation*}
    \begin{tikzcd}
      {F_0(D_0(j))} & {F_1(D_1(j))} \\
      {F_0(c_0)} & {F_1(c_1)}
      \arrow["{\vec D_j}", from=1-1, to=1-2]
      \arrow["f"', from=2-1, to=2-2]
      \arrow["{F_0(\lambda^0_j)}"', from=1-1, to=2-1]
      \arrow["{F_1(\lambda^1_j)}", from=1-2, to=2-2]
    \end{tikzcd},
    \qquad j \in \cat{J}.
  \end{equation*}
  Setting $\lambda \coloneqq (\lambda_j^0, \lambda_j^1)_{j \in \cat{J}}$, the
  cocone $((c_0, c_1, f), \lambda)$ can be shown to be a colimit of the diagram
  $D$.

  To prove the last statement about colimit preservation, let
  $D: \cat{J} \to \cat{X}$ be a diagram with colimit cocone $(x, \lambda)$,
  having legs $Dj \xrightarrow{\lambda_j} y$ for $j \in \cat{J}$. We must show
  that its image cocone $(G(x), G*\lambda)$ is a colimit of the diagram
  $G \circ D$ in $F_0/F_1$. By the universal property of the comma category, the
  functor $G: \cat{X} \to F_0/F_1$ is equivalent to the functors
  $G_i: \cat{X} \to \cat{C}_i$, $i=0,1$, along with a natural transformation
  $\vec G: F_0 \circ G_0 \To F_1 \circ G_1$. The image cocone
  $(G(x), G*\lambda)$ then amounts to cocones $(G_0(x), G_0 * \lambda)$ and
  $(G_1(x), G_1 * \lambda)$, which by hypothesis are colimits of the diagrams
  $G_0 \circ D$ and $G_1 \circ D$ in $\cat{C}_0$ and $\cat{C}_1$, together with
  a family of commutative squares in $\cat{C}$:
  \begin{equation*}
    \begin{tikzcd}
      {F_0(G_0(Dj))} & {F_1(G_1(Dj))} \\
      {F_0(G_0(x))} & {F_1(G_1(x))}
      \arrow["{\vec G_{Dj}}", from=1-1, to=1-2]
      \arrow["{F_0(G_0(\lambda_j))}"', from=1-1, to=2-1]
      \arrow["{\vec G_x}"', from=2-1, to=2-2]
      \arrow["{F_1(G_1(\lambda_j))}", from=1-2, to=2-2]
    \end{tikzcd},
    \qquad j \in \cat{J}.
  \end{equation*}
  But a morphism $\vec G_x$ making all these squares commute is already uniquely
  determined by the universal property of the colimit cocone
  $(F_0(G_0(x)), F_0 * G_0 * \lambda)$ for the diagram $F_0 \circ G_0 \circ D$,
  using the hypothesis that $F_0$ preserves $\cat{J}$-shaped colimits. Indeed,
  this is precisely how one constructs the colimit of the diagram $G \circ D$ in
  $F_0/F_1$, as shown above. It follows that $(G(x), G*\lambda)$ is a colimit
  cocone for $G \circ D$.
\end{proof}

\subsection{The Lotka-Volterra dynamical model}
\label{sec:lv}

A \define{Lotka-Volterra system} with $n$ species has, using matrix notation,
the vector field
\begin{equation*}
  v(x; \rho, \beta) \coloneqq x \odot (\rho + \beta x) = \diag(x) (\rho + \beta x)
\end{equation*}
with state vector $x \in \R^n$ and arbitrary real-valued parameters
$\rho \in \R^n$ and $\beta \in \R^{n \times n}$ \cite[\S 2.2]{szederkenyi2018}.
In coordinates, the vector field is
\begin{equation*}
  v_i(x; \rho, \beta)
    = x_i \left(\rho_i + \sum_{j=1}^n \beta_{i,j} x_j \right)
    = \rho_i x_i + \sum_{j=1}^n \beta_{i,j} x_i x_j,
  \qquad i = 1,\dots,n.
\end{equation*}
The parameter $\rho_i$ sets the baseline rate of growth (when positive) or decay
(when negative) for species $i$, whereas $\beta_{i,j}$ defines a promoting (when
positive) or inhibiting (when negative) effect of species $j$ on species $i$. In
typical applications the signs of the parameters are fixed and known in advance
of any data. For example, in the famous \define{predator-prey Lotka-Volterra
  system}
\begin{align*}
  \dot x &= ax - bxy \\
  \dot y &= dxy - cy,
\end{align*}
with prey $x$ and predators $y$, the parameters
$\rho = \begin{bmatrix} a \\ -c \end{bmatrix}$ and
$\beta = \begin{bmatrix} 0 & -b \\ d & 0 \end{bmatrix}$ are specified by
nonnegative real numbers $a, b, c, d \geq 0$.

In this section, we define quantitative semantics for graphs and signed graphs
using the Lotka-Volterra dynamical model. To illustrate the main ideas, we first
construct a functor from finite graphs (\cref{def:graph}) to linearly
parameterized dynamical systems (\cref{def:para-dynam}), giving a semantics for
unlabeled graphs. It is more useful to have a semantics for regulatory networks,
which we have defined to be signed graphs. We therefore construct a second
functor from finite signed graphs (\cref{def:sgn-graph}) to conically
parameterized nonnegative dynamical systems (\cref{def:para-nonneg-dynam}).

Recall that a graph is \define{finite} if its vertex and edge sets are both
finite. Let $\FinGraph$ denote the full subcategory of $\Graph$ spanned by
finite graphs.

\begin{theorem}[Lotka-Volterra model for finite graphs] \label{thm:lv-graph}
  There is a functor
  \begin{equation*}
    \LV: \FinGraph \to \LinParaDynam
  \end{equation*}
  that sends
  \begin{itemize}[noitemsep]
    \item a finite graph $X$ to the linearly parameterized dynamical system with
      parameter variables $P \coloneqq X(V) + X(E)$, state variables
      $S \coloneqq X(V)$, and algebraic vector field\footnote{For a fixed graph
      $X$ and vertex $i \in X(V)$, the notation $(e: i' \to i) \in X$ means any
      edge $e \in X(\tgt)^{-1}(i)$ incoming to $i$, whose source
      $i' = X(\src)(e)$ varies with $e$.}
      \begin{equation*}
        v(x; \rho, \beta)(i) \coloneqq \rho(i)\, x(i) +
          \sum_{(e: i' \to i) \in X} \beta(e)\, x(i')\, x(i),
        \qquad x \in \R^{X(V)},\ i \in X(V),
      \end{equation*}
      parameterized by vectors $\rho \in \R^{X(V)}$ and
      $\beta \in \R^{X(E)}$;
    \item a graph homomorphism $\phi: X \to Y$ to a morphism of systems with
      parameter variable map $\phi_V + \phi_E: X(V) + X(E) \to Y(V) + Y(E)$ and
      state variable map $\phi_V: X(V) \to Y(V)$.
  \end{itemize}
  Moreover, the functor $\LV$ preserves finite colimits.
\end{theorem}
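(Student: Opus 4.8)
The plan is to reduce the claim to the transparent fact that colimits of graphs are computed coordinatewise, by exploiting the comma-category structure of $\LinParaDynam$ together with the colimit-preservation criterion in \cref{lem:colim-comma-cat}. Recall that $\LinParaDynam = F/\Dynam$ carries two projection functors onto $\FinSet$: the functor $\pi_0$ extracting the parameter variables and the functor $\pi_1$ extracting the state variables. Since $F: \FinSet \to \Vect$ preserves finite colimits (as already noted in the proof of \cref{prop:colim-para-dynam}), the ``furthermore'' clause of \cref{lem:colim-comma-cat} applies: to prove that $\LV$ preserves finite colimits, it suffices to prove that the two composites $\pi_0 \circ \LV$ and $\pi_1 \circ \LV$ do so.

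First I would identify these composites directly from the definition of $\LV$. On objects, $\pi_1 \circ \LV$ sends a finite graph $X$ to its set of state variables $X(V)$, and on a homomorphism $\phi$ it returns the state-variable map $\phi_V$; thus $\pi_1 \circ \LV = \ev_V$, the evaluation-at-$V$ functor $\FinGraph \to \FinSet$. Similarly $\pi_0 \circ \LV$ sends $X$ to its parameter variables $X(V) + X(E)$ and $\phi$ to $\phi_V + \phi_E$; thus $\pi_0 \circ \LV = \ev_V + \ev_E$, where $\ev_E$ is evaluation at $E$ and $+$ denotes the pointwise coproduct (disjoint union) of the two functors.

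Next I would verify that both composites preserve finite colimits. The evaluation functors $\ev_V$ and $\ev_E$ preserve all colimits, because colimits in the copresheaf category $\Graph$ are computed coordinatewise on vertex and edge sets; since finite colimits of finite sets are finite, finite colimits in $\FinGraph$ are computed the same way, so $\ev_V, \ev_E: \FinGraph \to \FinSet$ preserve finite colimits. For the parameter functor $\ev_V + \ev_E$, I would factor it as $\FinGraph \xrightarrow{\langle \ev_V, \ev_E \rangle} \FinSet \times \FinSet \xrightarrow{+} \FinSet$. The pairing $\langle \ev_V, \ev_E \rangle$ preserves finite colimits because colimits in a product category are formed componentwise and each component does, while the coproduct functor $+$, being left adjoint to the diagonal $\Delta: \FinSet \to \FinSet \times \FinSet$, preserves all colimits. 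Hence $\ev_V + \ev_E$ preserves finite colimits, and the reduction is complete.

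There is no serious obstacle here: the argument is essentially bookkeeping built on \cref{lem:colim-comma-cat}. The only point demanding a moment's care is the parameter-variable functor, where one must recognize that the pointwise disjoint union $X \mapsto X(V) + X(E)$ inherits colimit-preservation from the fact that $+$ is itself a left adjoint, rather than attempting to argue about it directly.
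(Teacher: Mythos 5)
There is a genuine gap: your proposal proves only the ``Moreover'' clause and silently assumes the main clause, namely that $\LV$ is a well-defined functor at all. When you write that you would ``identify these composites directly from the definition of $\LV$,'' you are taking for granted that the assignment $\phi \mapsto (\phi_V + \phi_E, \phi_V)$ actually produces a \emph{morphism} of linearly parameterized dynamical systems. That is precisely the nontrivial content of the first part of the theorem: one must check that the square \eqref{eq:para-dynam-hom} commutes, i.e., that
\begin{equation*}
  (\phi_V)_*\bigl(v_X(\phi_V^*(y);\, \rho, \beta)\bigr) \;=\; v_Y\bigl(y;\, (\phi_V + \phi_E)_*(\rho,\beta)\bigr)
\end{equation*}
for every graph homomorphism $\phi: X \to Y$. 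Equivalently, in the comma-category packaging, one must verify that the components $\vec\LV_X: \R^{X(V)+X(E)} \to \Dynam(X(V))$ assemble into a \emph{natural} transformation $F \circ \LV_0 \To \Dynam \circ \LV_1$. This is the bulk of the paper's proof: it splits the check into the $\rho$-part and the $\beta$-part, and the $\beta$-part requires an exchange of the order of summation over $i \in \phi_V^{-1}(j)$ and $e \in X(\tgt)^{-1}(i)$ versus $f \in Y(\tgt)^{-1}(j)$ and $e \in \phi_E^{-1}(f)$, which is valid exactly because $\phi$ preserves sources and targets. Nothing in your proposal engages with this, and it does not follow from any of the colimit bookkeeping; without it there is no functor whose colimit preservation could be discussed.

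The portion you do prove is correct and is essentially identical to the paper's treatment of the same clause: both invoke the ``furthermore'' part of \cref{lem:colim-comma-cat} (legitimately, since $F$ preserves finite colimits), identify $\pi_1 \circ \LV = \ev_V$ and $\pi_0 \circ \LV = \ev_V + \ev_E$, and conclude via the factorization of the latter through $\FinSet \times \FinSet$ followed by the coproduct functor, which preserves colimits as a left adjoint. So the colimit argument stands, but the proof as a whole is incomplete until the naturality computation is supplied.
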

\begin{proof}
  By the universal property of the comma category $\LinParaDynam = F/\Dynam$, to
  give a functor $\LV: \FinGraph \to \LinParaDynam$ is to give a pair of
  functors $\LV_0, \LV_1: \FinGraph \to \FinSet$ along with a natural
  transformation
  \begin{equation*}
    \vec\LV: (F \circ \LV_0) \To (\Dynam \circ \LV_1): \FinGraph \to \Vect.
  \end{equation*}
  We set $\LV_0(X) \coloneqq X(V) + X(E)$ and $\LV_1(X) \coloneqq X(V)$. Using
  the universal property of the coproduct in $\Vect$, the components
  \begin{equation*}
    \vec\LV_X: \R^{X(V)} \oplus \R^{X(E)} \cong \R^{X(V) + X(E)} \to \Dynam(X(V)).
  \end{equation*}
  of the transformation $\vec\LV$ themselves decompose into two parts, call them
  \begin{equation*}
    v_X^0 \coloneqq \vec\LV_X^0: \R^{X(V)} \to \Dynam(X(V))
    \quad\text{and}\quad
    v_X^1 \coloneqq \vec\LV_X^1: \R^{X(E)} \to \Dynam(X(V)).
  \end{equation*}
  We define these to be
  \begin{equation*}
    v_X^0(x; \rho)(i) \coloneqq \rho(i)\, x(i)
    \qquad\text{and}\qquad
    v_X^1(x; \beta)(i) \coloneqq \sum_{e \in X(\tgt)^{-1}(i)} \beta(e)\, x(X(\src)(e))\, x(i).
  \end{equation*}
  Putting the pieces back together reproduces the first statement of the
  theorem. We just need to check that the transformation $\vec\LV$ is, in fact,
  natural.

  Given a graph homomorphism $\phi: X \to Y$, the naturality square for the
  transformation $\vec\LV$ is
  \begin{equation} \label{eq:lv-naturality}
    \begin{tikzcd}
      {\R^{X(V)+X(E)}} & {\Dynam(X(V))} \\
      {\R^{Y(V)+Y(E)}} & {\Dynam(Y(V))}
      \arrow["{\vec\LV_X}", from=1-1, to=1-2]
      \arrow["{(\phi_V + \phi_E)_*}"', from=1-1, to=2-1]
      \arrow["{(\phi_V)_* \circ (-) \circ (\phi_V)^*}", from=1-2, to=2-2]
      \arrow["{\vec\LV_Y}"', from=2-1, to=2-2]
    \end{tikzcd},
  \end{equation}
  which decomposes into two squares,
  \begin{equation*}
    \begin{tikzcd}
      {\R^{X(V)}} & {\Dynam(X(V))} \\
      {\R^{Y(V)}} & {\Dynam(Y(V))}
      \arrow["{v_X^0}", from=1-1, to=1-2]
      \arrow["{(\phi_V)_*}"', from=1-1, to=2-1]
      \arrow["{(\phi_V)_* \circ (-) \circ (\phi_V)^*}", from=1-2, to=2-2]
      \arrow["{v_Y^0}"', from=2-1, to=2-2]
    \end{tikzcd}
    \qquad\text{and}\qquad
    \begin{tikzcd}
      {\R^{X(E)}} & {\Dynam(X(V))} \\
      {\R^{Y(E)}} & {\Dynam(Y(V))}
      \arrow["{v_X^1}", from=1-1, to=1-2]
      \arrow["{(\phi_E)_*}"', from=1-1, to=2-1]
      \arrow["{(\phi_V)_* \circ (-) \circ (\phi_V)^*}", from=1-2, to=2-2]
      \arrow["{v_Y^1}"', from=2-1, to=2-2]
    \end{tikzcd}.
  \end{equation*}
  Let us check that both squares commute. For the first, we have
  \begin{align*}
    (\phi_V)_*(v_X^0(\phi_V^*(y); \rho))(j)
      &= \sum_{i \in \phi_V^{-1}(j)} v_X^0(y \circ \phi_V; \rho)(i)
       = \sum_{i \in \phi_V^{-1}(j)} \rho(i)\, y(\phi_V(i)) \\
      &= \left(\sum_{i \in \phi_V^{-1}(j)} \rho(i) \right) y(j)
       = v_Y^0(y; (\phi_V)_*(\rho))(j)
  \end{align*}
  for all $y \in \R^{Y(V)}$, $\rho \in \R^{X(V)}$, and $j \in Y(V)$. For the
  second, we have
  \begin{align*}
    (\phi_V)_*(v_X^1(\phi_V^*(y); \beta))(j)
      &= \sum_{i \in \phi_V^{-1}(j)} v_X^1(y \circ \phi_V; \beta)(i) \\
      &= \sum_{i \in \phi_V^{-1}(j)} \sum_{e \in X(\tgt)^{-1}(i)}
         \beta(e)\, y(\phi_V(X(\src)(e)))\, y(j) \\
      &= \sum_{f \in Y(\tgt)^{-1}(j)} \sum_{e \in \phi_E^{-1}(f)}
         \beta(e)\, y(Y(\src)(\phi_E(e)))\, y(j) \\
      &= \sum_{f \in Y(\tgt)^{-1}(j)}
         \left(\sum_{e \in \phi_E^{-1}(f)} \beta(e)\right)
         y(Y(\src)(f)) y(j) \\
      &= v_Y^1(y; (\phi_E)_*(\beta))(j),
  \end{align*}
  for all $y \in \R^{Y(V)}$, $\beta \in \R^{X(E)}$, and $j \in Y(V)$. When
  exchanging the order of the summations we have used the facts that the graph
  homomorphism $\phi: X \to Y$ preserves sources and targets, the latter in its
  contravariant form
  \begin{equation*}
    \begin{tikzcd}
      {X(E)} & {X(V)} \\
      {Y(E)} & {Y(V)}
      \arrow["{X(\mathrm{tgt})}", from=1-1, to=1-2]
      \arrow["{\phi_E}"', from=1-1, to=2-1]
      \arrow["{Y(\mathrm{tgt})}"', from=2-1, to=2-2]
      \arrow["{\phi_V}", from=1-2, to=2-2]
    \end{tikzcd}
    \qquad\leadsto\qquad
    \begin{tikzcd}
      {\mathcal{P}(Y(V))} & {\mathcal{P}(X(V))} \\
      {\mathcal{P}(Y(E))} & {\mathcal{P}(X(E))}
      \arrow["{X(\mathrm{tgt})^{-1}}", from=1-2, to=2-2]
      \arrow["{\phi_E^{-1}}"', from=2-1, to=2-2]
      \arrow["{Y(\mathrm{tgt})^{-1}}"', from=1-1, to=2-1]
      \arrow["{\phi_V^{-1}}", from=1-1, to=1-2]
    \end{tikzcd},
  \end{equation*}
  where $\mathcal{P}(S)$ denotes the power set of a set $S$.

  Finally, we must verify that the functor $\LV$ preserves finite colimits. By
  \cref{lem:colim-comma-cat}, that happens provided both functors
  $\LV_0, \LV_1: \FinGraph \to \FinSet$ preserve finite colimits. The functor
  $\LV_1 = \ev_V$ is an evaluation functor on a copresheaf category, hence
  preserves colimits \cite[Proposition 3.3.9]{riehl2016}. Since coproducts
  commute with colimits, the pointwise coproduct of two evaluation functors
  \begin{equation*}
    \LV_0 = \big(\FinGraph
      \xrightarrow{\langle \ev_V, \ev_E \rangle} \FinSet \times \FinSet
      \xrightarrow{+} \FinSet\big)
  \end{equation*}
  also preserves colimits. This completes the proof.
\end{proof}

A quantitative semantics for signed graphs can be defined similarly, subject to
a caveat about the vertex parameters. Our notion of signed graph, designed to
capture regulatory networks as studied in the biochemistry literature, attaches
signs only to edges. We are thus led to assume that, in the Lotka-Volterra
dynamical model, all species have baseline rates of \emph{decay} rather than
growth. This assumption is valid for some, though certainly not all, regulatory
networks \cite[\mbox{p.\ 222}]{tyson2010}. It is not suitable for predator-prey
models in ecology.

Baseline rates of growth or decay could, if needed, be parameterized more
flexibly. Most simply, one could attach signs to vertices as well as edges and
use them in the quantitative semantics. Alternatively, at the expense of a more
cumbersome formalism, one could define dynamical systems with mixed
linear-conical parameterizations, allowing the vertex parameters to be arbitrary
real numbers while the edge parameters are constrained to be
nonnegative.\footnote{Similar mixed parameterizations are a practical necessity
  in parametric \emph{statistical} models, studied in detail in one author's PhD
  thesis \cite{patterson2020}.} For simplicity and uniformity of presentation,
we do not describe these extensions further.

Let $\FinSgnGraph$ denote the full subcategory of $\SgnGraph$ spanned by finite
signed graphs.

\begin{theorem}[Lotka-Volterra model for finite signed graphs] \label{thm:lv-sgn-graph}
  There is a functor
  \begin{equation*}
    \LV: \FinSgnGraph \to \ConParaDynam
  \end{equation*}
  that sends a finite signed graph $X$ to the conically parameterized
  nonnegative dynamical system with parameter variables
  $P \coloneqq X(V) + X(E)$, state variables $S \coloneqq X(V)$, and essentially
  nonnegative, algebraic vector field
  \begin{equation*}
    v(x; \rho, \beta)(i) \coloneqq -\rho(i)\, x(i) +
      \sum_{(e: i' \to i) \in X} X(\sgn)(e)\, \beta(e)\, x(i')\, x(i),
    \qquad x \in \R^{X(V)},\ i \in X(V),
  \end{equation*}
  parameterized by $\rho \in \R_+^{X(V)}$ and $\beta \in \R_+^{X(E)}$. Moreover,
  the functor $\LV$ preserves finite colimits.
\end{theorem}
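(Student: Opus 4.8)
The plan is to follow the structure of the proof of \cref{thm:lv-graph}, now working in the comma category $\ConParaDynam = F_+/\Dynam_+$. By its universal property, giving a functor $\LV: \FinSgnGraph \to \ConParaDynam$ amounts to giving a pair of functors $\LV_0, \LV_1: \FinSgnGraph \to \FinSet$ together with a natural transformation $\vec\LV: (F_+ \circ \LV_0) \To (\Dynam_+ \circ \LV_1)$ valued in $\Con$. I would set $\LV_0(X) \coloneqq X(V) + X(E)$ and $\LV_1(X) \coloneqq X(V)$, exactly as before, and split each component $\vec\LV_X: \R_+^{X(V)+X(E)} \to \Dynam_+(X(V))$ into a vertex part $v_X^0(x;\rho)(i) \coloneqq -\rho(i)\,x(i)$ and an edge part $v_X^1(x;\beta)(i) \coloneqq \sum_{e \in X(\tgt)^{-1}(i)} X(\sgn)(e)\,\beta(e)\,x(X(\src)(e))\,x(i)$, whose sum recovers the vector field in the statement.

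The first point requiring care, absent from \cref{thm:lv-graph}, is that these vector fields must land in $\Dynam_+$, i.e., be essentially nonnegative. This is immediate from the criterion recalled before \cref{lem:nonneg-dynam}: every monomial of $v(x;\rho,\beta)(i)$ carries a factor of $x(i)$, so $v(x;\rho,\beta)(i) = 0$ whenever $x(i) = 0$, and the required inequality holds trivially (in fact for arbitrary real parameters, not merely nonnegative ones). Conic-linearity of $(\rho,\beta) \mapsto v(\cdot;\rho,\beta)$ is clear, so each $\vec\LV_X$ is a genuine morphism of $\Con$; the restriction to nonnegative parameters simply reflects that promotion versus inhibition is now recorded by $X(\sgn)$ in the syntax rather than by the sign of $\beta$.

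Next I would verify naturality of $\vec\LV$ along a morphism of signed graphs $\phi: X \to Y$. The square decomposes into a vertex square and an edge square, and both computations run parallel to those in \cref{thm:lv-graph}: the vertex square is identical up to the overall minus sign, which factors through harmlessly. The edge square is where the signed structure enters. After reindexing the double sum along $\phi_E$ using the contravariant preservation of targets, one reaches $\sum_{f \in Y(\tgt)^{-1}(j)} \sum_{e \in \phi_E^{-1}(f)} X(\sgn)(e)\,\beta(e)\,y(Y(\src)(f))\,y(j)$, and the key step is to pull the sign out of the inner sum as $Y(\sgn)(f)$. This is legitimate precisely because $\phi$ preserves signs, $X(\sgn)(e) = Y(\sgn)(\phi_E(e)) = Y(\sgn)(f)$ for every $e \in \phi_E^{-1}(f)$, after which the inner sum collapses to $(\phi_E)_*(\beta)(f)$ and the whole expression becomes $v_Y^1(y; (\phi_E)_*(\beta))(j)$. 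I expect this sign bookkeeping to be the only genuinely new ingredient in the construction.

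Finally, for preservation of finite colimits I would again invoke \cref{lem:colim-comma-cat}, reducing to the claim that $\LV_0$ and $\LV_1$ preserve finite colimits. Here the new wrinkle is that the domain is the slice $\FinSgnGraph$ rather than $\FinGraph$. I would observe that $\LV_1 = \ev_V$ and $\LV_0 = \ev_V + \ev_E$ both factor through the domain projection $\SgnGraph \to \Graph$, which creates colimits since colimits in a slice category are computed in the base, followed by the evaluation functors $\ev_V, \ev_E: \Graph \to \Set$, which preserve colimits as evaluations on a copresheaf category; since coproduct preserves colimits, $\LV_0$ does as well. The main obstacles are thus the sign bookkeeping in the naturality check and the verification, via the slice projection, that the evaluation functors continue to preserve colimits on $\FinSgnGraph$.
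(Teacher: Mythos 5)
Your proposal is correct and follows essentially the same route as the paper, which simply notes that the construction and naturality argument carry over from \cref{thm:lv-graph} with sign preservation by $\phi$ being the one new ingredient, and that the colimit argument is unchanged. Your additional checks (essential nonnegativity via the factor of $x(i)$ in every monomial, and the factoring of $\ev_V$, $\ev_E$ through the slice projection) correctly fill in details the paper leaves implicit.
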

\begin{proof}
  Similarly to the previous proof, the functor
  $\LV: \FinSgnGraph \to \ConParaDynam$ is defined by functors
  $\LV_0, \LV_1: \FinSgnGraph \to \FinSet$ along with a natural transformation
  \begin{equation*}
    \vec\LV: F_+ \circ \LV_0 \To \Dynam_+ \circ \LV_1: \FinSgnGraph \to \Con,
  \end{equation*}
  now having components $\vec\LV_X$ given by the copairing of
  \begin{equation*}
    v_X^0 \coloneqq \vec\LV_X^0: \R_+^{X(V)} \to \Dynam_+(X(V))
    \quad\text{and}\quad
    v_X^1 \coloneqq \vec\LV_X^1: \R_+^{X(E)} \to \Dynam_+(X(V)),
  \end{equation*}
  where we define
  \begin{equation*}
    v_X^0(x; \rho)(i) \coloneqq -\rho(i)\, x(i)
    \quad\text{and}\quad
    v_X^1(x; \beta)(i) \coloneqq \sum_{e \in X(\tgt)^{-1}(i)}
      X(\sgn)(e)\, \beta(e)\, x(X(\src)(e))\, x(i).
  \end{equation*}
  The proof of naturality is essentially the same as before, using the crucial
  additional fact that morphisms of signed graphs preserve signs. The proof that
  the functor $\LV$ preserves finite colimits is unchanged.
\end{proof}

To exemplify the theorem, let us see how the Lotka-Volterra dynamics functor
acts on a monomorphism and on an epimorphism of signed graphs.

In order to compare the dynamics of two species $A$ and $B$ involved in a
negative feedback loop versus $A$ and $B$ in isolation, we take the inclusion of
signed graphs
\begin{equation*}
  \begin{tikzcd}[column sep=small]
    A & B & {} & {} & A & B
    \arrow[curve={height=-12pt}, from=1-5, to=1-6]
    \arrow[curve={height=12pt}, maps to, no head, from=1-5, to=1-6]
    \arrow["\iota", hook, from=1-3, to=1-4]
  \end{tikzcd}
\end{equation*}
Labeling the edges in the feedback loop as $AB$ and $BA$, the morphism
$\LV(\iota)$ sends the conically parameterized dynamical system
\begin{equation*}
  \left\{
    \begin{aligned}
      v_A(x; \rho) &= -\rho_A\, x_A \\
      v_B(x; \rho) &= -\rho_B\, x_B
    \end{aligned}
  \right.,
  \qquad
  \rho \in \R_+^{\{A,B\}},
\end{equation*}
to the parameterized dynamical system
\begin{equation*}
  \left\{
    \begin{aligned}
      v_A(x; \rho, \beta) &= -\rho_A\, x_A - \beta_{BA}\, x_B\, x_A \\
      v_B(x; \rho, \beta) &= -\rho_B\, x_B + \beta_{AB}\, x_A\, x_B
    \end{aligned}
  \right.,
  \qquad
  \rho \in \R_+^{\{A,B\}},\ \beta \in \R_+^{\{AB,BA\}},
\end{equation*}
by setting the latter's interaction coefficients to zero:
$\beta_{AB} = \beta_{BA} = 0$. This formalizes the commonsense fact that the
first system is a degenerate case of the second.

For a more interesting example, we return to the projection map between
regulatory networks given by \cref{eq:arginine-morphism} of
\cref{sec:sgn-graph}, inspired by the arginine biosynthesis system. Call this
projection map $p$, and abbreviate the regulator molecule as $R$ and the enzymes
as $S \coloneqq \{C,D,E,F,I\}$. The morphism $\LV(p)$ sends the parameterized
dynamical system
\begin{equation*}
  \left\{
    \begin{aligned}
      v_R(x; \rho, \beta) &= -\rho_R\, x_R - \beta_R\, x_R^2 \\
      v_C(x; \rho, \beta) &= -\rho_C\, x_C - \beta_C\, x_R\, x_C \\
      v_D(x; \rho, \beta) &= -\rho_D\, x_D - \beta_D\, x_R\, x_D
    \end{aligned}
    \qquad\qquad
    \begin{aligned}
      v_E(x; \rho, \beta) &= -\rho_E\, x_E - \beta_E\, x_R\, x_E \\
      v_F(x; \rho, \beta) &= -\rho_F\, x_F - \beta_F\, x_R\, x_F \\
      v_I(x; \rho, \beta) &= -\rho_I\, x_I - \beta_I\, x_R\, x_I
    \end{aligned}
  \right.
\end{equation*}
with state variables $\{R\} + S$ and parameters $\rho, \beta \in \R_+^{\{R\}+S}$
to the parameterized dynamical system
\begin{equation*}
  \left\{
    \begin{aligned}
      v_R(x; \rho, \beta) &= -\rho_R\, x_R - \beta_R\, x_R^2 \\
      v_*(x; \rho, \beta) &= -\rho_*\, x_* - \beta_*\, x_R\, x_*
    \end{aligned}
  \right.
\end{equation*}
with state variables $\{R,*\}$ and parameters $\rho, \beta \in \R_+^{\{R,*\}}$,
in two different but equivalent ways. The first way sets the latter system's
coefficients equal to sums of the former's coefficients, namely
\begin{equation*}
  \rho_* = \sum_{i \in S} \rho_i
  \qquad\text{and}\qquad
  \beta_* = \sum_{i \in S} \beta_i.
\end{equation*}
The second way substitutes $x_*$ for each $x_i$, $i \in S$, in the first system
and then takes the vector field $v_*$ to be the sum of the $v_i$'s, $i \in S$,
with these substitutions. The equivalence of these operations is precisely the
condition for $\LV(p)$ to be a morphism of parameterized dynamical systems, cf.\
\cref{eq:para-dynam-hom,eq:lv-naturality}.

\subsection{Composing Lotka-Volterra models}
\label{sec:open-lv}

To complete this part of the story, we extend the Lotka-Volterra dynamics
functors between graphs and parameterized dynamical systems, constructed in
\cref{thm:lv-graph,thm:lv-sgn-graph}, to \emph{double} functors between open
graphs and open parameterized dynamical systems. We begin by making
parameterized dynamical systems into open systems. By ``open systems,'' we mean
dynamical systems that have specified interfaces along which they can share
material with other systems.

\begin{proposition}[Open parameterized dynamical systems]
  There is a symmetric monoidal double category of \emph{open} linearly
  parameterized dynamical systems, $\Open{\LinParaDynam}$, having
  \begin{itemize}[noitemsep]
    \item as objects, finite sets $A, A', \dots$;
    \item as vertical morphisms, functions $f: A \to A'$;
    \item as horizontal morphisms, \define{open linearly parameterized dynamical
      systems}, which consist of a linearly parameterized dynamical system
      $(P, S, v: \R^P \to \Dynam(S))$ along with a cospan
      $A_0 \xrightarrow{\ell_0} S \xleftarrow{\ell_1} A_1$ whose apex is the set $S$
      of state variables;
    \item as cells, \define{morphisms of such open systems}
      $(P, S, v, \ell_0, \ell_1) \to (P', S', v', \ell_0', \ell_1')$, which
      consist of a morphism $(q,f): (P,S,v) \to (P',S',v')$ between linearly
      parameterized dynamical systems along with functions $f_0: A_0 \to A_0'$
      and $f_1: A_1 \to A_1'$ making the diagram commute:
      \begin{equation*}
        \begin{tikzcd}
          {A_0} & S & {A_1} \\
          {A_0'} & {S'} & {A_1'}
          \arrow["{\ell_0}", from=1-1, to=1-2]
          \arrow["{\ell_1}"', from=1-3, to=1-2]
          \arrow["{f_0}"', from=1-1, to=2-1]
          \arrow["f"', from=1-2, to=2-2]
          \arrow["{\ell_0'}"', from=2-1, to=2-2]
          \arrow["{f_1}", from=1-3, to=2-3]
          \arrow["{\ell_1'}", from=2-3, to=2-2]
        \end{tikzcd}.
      \end{equation*}
  \end{itemize}
  Vertical composition is by composition in $\FinSet$ and in $\LinParaDynam$.
  Horizontal composition and monoidal products are by pushouts and coproducts in
  $\LinParaDynam$, respectively, interpreting the finite sets in the feet of the
  cospans as linearly parameterized dynamical systems with no parameter
  variables and identically zero vector fields.

  Similarly, there is a symmetric monoidal double category
  $\Open{\ConParaDynam}$ of \emph{open} conically parameterized nonnegative
  dynamical systems.
\end{proposition}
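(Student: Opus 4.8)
The plan is to obtain $\Open{\LinParaDynam}$ as a double category of structured cospans, exactly as in the proofs of \cref{prop:open-sgn-graphs,prop:open-sgn-cats}. The key input is a finite-colimit-preserving functor $L \colon \FinSet \to \LinParaDynam$ along which to structure the cospans, and I would take $L$ to send a finite set $A$ to the system $(\emptyset, A, v_0)$ with no parameter variables, state variables $A$, and the unique linear map $v_0 \colon \R^\emptyset \to \Dynam(A)$ out of the zero vector space (equivalently, the identically zero vector field). This is precisely the interpretation of cospan feet demanded by the proposition statement.

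The first step is to recognize $L$ as a left adjoint. I would introduce the evaluation-at-state-variables functor $\ev_S \colon \LinParaDynam \to \FinSet$, sending $(P,S,v)$ to $S$ and a morphism $(q,f)$ to $f$, and check that $L \dashv \ev_S$. This is immediate from the description of morphisms in the comma category (\cref{def:para-dynam}): a morphism $L(A) \to (P',S',v')$ consists of the forced map $\emptyset \to P'$ together with a function $A \to S'$, and the commuting square \cref{eq:para-dynam-hom} is automatic because its left edge emanates from the zero space $\R^\emptyset = 0$. Hence $\Hom_{\LinParaDynam}(L(A),(P',S',v')) \cong \Hom_{\FinSet}(A, \ev_S(P',S',v'))$, naturally in both arguments. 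Being a left adjoint, $L$ preserves all colimits that exist in $\FinSet$, in particular finite ones.

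With this in hand, the second step is a direct appeal to the structured cospan machinery. Both $\FinSet$ and $\LinParaDynam$ have finite colimits, the latter by \cref{prop:colim-para-dynam}, and $L$ preserves them, so the theorems of Baez and Courser \cite[Theorems 2.3 and 3.9]{baez2020} produce a symmetric monoidal double category ${}_L\Csp(\LinParaDynam)$ of $L$-structured cospans, with horizontal composition by pushout and monoidal product by coproduct in $\LinParaDynam$. To match this with the explicit description in the statement, I would reuse the decorated/structured cospan equivalence recorded in the proof of \cref{prop:open-sgn-graphs}: under $L \dashv \ev_S$, an $L$-structured cospan $L(A_0) \to (P,S,v) \leftarrow L(A_1)$ corresponds bijectively to a cospan of sets $A_0 \to S \leftarrow A_1$ decorating $(P,S,v)$, and a cell of structured cospans corresponds to the commuting diagram of cells displayed in the proposition. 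The conical case is verbatim the same, replacing $F, \Dynam, \Vect$ by $F_+, \Dynam_+, \Con$ and invoking the other half of \cref{prop:colim-para-dynam}, with $L_+ \colon \FinSet \to \ConParaDynam$, $A \mapsto (\emptyset, A, 0)$, left adjoint to state-variable evaluation on $\ConParaDynam$.

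I expect the only genuinely substantive point to be the identification of the left adjoint $L$ — specifically, confirming that ``no parameters, zero vector field, state variables $A$'' is the value at $A$ of the left adjoint to $\ev_S$, so that structured cospans along $L$ reproduce the stated open systems. Everything downstream is an instance of the same structured-cospan construction already carried out twice above, so no new double-categorical work is required.
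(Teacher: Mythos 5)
Your proposal is correct and follows essentially the same route as the paper: the paper also exhibits the ``empty parameters, zero vector field'' functor (called $Z$ there) as left adjoint to the state-variable projection $\pi_S$, invokes \cref{prop:colim-para-dynam} together with the structured-cospan theorems of Baez--Courser, and identifies the result with the stated double category via the structured/decorated cospan correspondence from \cref{prop:open-sgn-graphs}. No substantive differences.
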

\begin{proof}
  We do the construction for linearly parameterized dynamical systems. The
  construction for conically parameterized nonnegative dynamical systems is
  perfectly analogous, replacing $\R$ with $\R_+$ and vector spaces with conical
  spaces.

  The projection functor $\pi_S: \LinParaDynam \to \FinSet$, $(P,S,v) \mapsto S$
  that sends a linearly parameterized dynamical systems to its set of state
  variables has a left adjoint $Z: \FinSet \to \LinParaDynam$ that sends a
  finite set $S$ to the system $(\emptyset, S, 0)$ with empty set of parameter
  variables. By linearity, its parameterized vector field
  $0 \cong \R^\emptyset \to \Dynam(S)$ is necessarily the zero vector field.
  This indeed gives an adjunction $Z \dashv \pi_S$, because to any function
  $f: S \to S'$ and linearly parameterized dynamical system $(P',S',v')$ there
  corresponds a unique morphism $(0_{P'}, f): Z(S) \to (P',S',v')$, where the
  compatibility square
  \begin{equation*}
    \begin{tikzcd}
      0 & {\Dynam(S)} \\
      {\R^{P'}} & {\Dynam(S')}
      \arrow[from=1-1, to=1-2]
      \arrow["{f_* \circ (-) \circ f^*}", from=1-2, to=2-2]
      \arrow["{v'}"', from=2-1, to=2-2]
      \arrow[from=1-1, to=2-1]
    \end{tikzcd}
  \end{equation*}
  commutes trivially, since the zero vector space is initial in $\Vect$.

  Therefore, since $\LinParaDynam$ has finite colimits
  (\cref{prop:colim-para-dynam}), we can construct a symmetric monoidal double
  category of $Z$-structured cospans \cite[Theorem 3.9]{baez2020}, which is
  isomorphic to $\Open{\LinParaDynam}$ by arguments given before.
\end{proof}

We can now construct double functors between open graphs and open parameterized
dynamical systems, but the vertex parameters under Lotka-Volterra dynamics cause
a twist in the story compared to Baez and Pollard's compositionality result for
mass-action kinetics of open Petri nets \cite[\mbox{Theorem 18}]{baez2017}. When
composing open dynamical systems in the image of the Lotka-Volterra functor, one
takes a coproduct of the parameter variables, i.e., a direct sum of the
parameter spaces, belonging to identified vertices. However, if one composes the
open graphs first, then the identified vertices receive a single copy of the
parameters from the Lotka-Volterra functor. Thus this functor does not preserve
composition of open systems, not even up to isomorphism. Nevertheless, there is
a (noninvertible) comparison between the two: given a pair of vertex parameters
in the direct sum, we can reduce them to a single parameter simply by summing
them. In mathematical terms, we get a \emph{lax} double functor: a double
functor that strictly preserves vertical composition, as usual, but preserves
horizontal composition only up to specified comparison cells.\footnote{The
  precise definition of a lax double functor can be found, for instance, in the
  textbook \cite[\S{3.5}]{grandis2019}.} While this laxness could be seen as a
failure of compositionality, it is at most a very mild and well controlled
failure. It is better regarded as a bookkeeping device for the vertex
parameters.

\begin{theorem}[Open Lotka-Volterra models] \label{thm:open-lv}
  There is a symmetric monoidal \emph{lax} double functor
  \begin{equation*}
    \LV: \Open{\FinGraph} \to \Open{\LinParaDynam}
  \end{equation*}
  that acts
  \begin{itemize}[noitemsep]
    \item on objects and vertical morphisms, as the identity;
    \item on horizontal morphisms and cells, by the functor
      $\LV: \FinGraph \to \LinParaDynam$ on graphs and graph homomorphisms and as
      the identity on the associated cospans and cospan morphisms:
      \begin{equation*}
        \Big(X,\, A_0 \xrightarrow{\ell_0} X(V) \xleftarrow{\ell_1} A_1\Big) \mapsto
        \Big(\LV(X),\, A_0 \xrightarrow{\ell_0} X(V) \xleftarrow{\ell_1} A_1\Big).
      \end{equation*}
  \end{itemize}
  The comparison cells are defined using the morphisms of linearly parameterized
  dynamical systems $\alpha_S: Z(S) \to \LV(\Disc S)$, where
  \begin{equation*}
    \alpha_S \coloneqq (0_S, 1_S): (\emptyset, S, 0) \to (S, S, \vec\LV(\Disc S)),
    \qquad S \in \FinSet.
  \end{equation*}
  \begin{itemize}[noitemsep]
    \item Given composable open graphs $(X,\, A \rightarrow X(V) \leftarrow B)$
      and $(Y,\, B \rightarrow Y(V) \leftarrow C)$, the comparison cell for
      horizontal composition is given by the morphism of systems
      \begin{equation*}
        \LV(X) +_{Z(B)} \LV(Y)
          \xrightarrow{\id +_{\alpha_B} \id} \LV(X) +_{\LV(\Disc B)} \LV(Y)
          \xrightarrow{\cong} \LV(X +_{\Disc B} Y).
      \end{equation*}
    \item Given a finite set $A$, the comparison cell for the horizontal unit is
      given by the morphism of systems $\alpha_A: Z(A) \to \LV(\Disc A)$.
  \end{itemize}
  Similarly, there is a symmetric monoidal \emph{lax} double functor
  \begin{equation*}
    \LV: \Open{\FinSgnGraph} \to \Open{\ConParaDynam}.
  \end{equation*}
\end{theorem}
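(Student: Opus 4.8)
The plan is to exhibit $\LV$ as a lax morphism of structured cospan double categories, extending the strategy of \cref{prop:open-sgn-cats} but now tolerating a nonstrict comparison. Both double categories are of this form: $\Open{\FinGraph}$ consists of $\Disc$-structured cospans for $\Disc: \FinSet \to \FinGraph$, and $\Open{\LinParaDynam}$ of $Z$-structured cospans for the left adjoint $Z: \FinSet \to \LinParaDynam$ to the state-variable projection $\pi_S$, as in the preceding proposition. The double functor acts as the identity on objects and vertical morphisms; on a horizontal morphism it applies $\LV$ to the decorating graph and retains the cospan legs, and likewise on cells it applies $\LV$ to the graph homomorphism. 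This is well defined because $\pi_S \circ \LV = \ev_V$ strictly---the state variables of $\LV(X)$ are exactly the vertices $X(V)$---so the feet and leg maps of every cospan are preserved on the nose, and strict functoriality in the vertical direction is inherited from \cref{thm:lv-graph}.

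The source of the laxity is that $\LV \circ \Disc \ne Z$: under $\LV$ the discrete graph on $S$ acquires one baseline-rate parameter per vertex, so that $\LV(\Disc S) = (S, S, \vec\LV(\Disc S))$ while $Z(S) = (\emptyset, S, 0)$. Consequently the strict result \cite[Theorem 4.3]{baez2020} does not apply, and I would instead build the comparison cells by hand from the natural transformation $\alpha: Z \To \LV \circ \Disc$ with components $\alpha_S = (0_S, 1_S)$, which forgets the vertex parameters and is the identity on state variables. The first step is to check that $\alpha$ is natural in $S$, which is immediate since $\LV(\Disc f)$ has vertex-parameter map and state map both equal to $f$. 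The comparison cell for horizontal composition is then the canonical map
\begin{equation*}
  \LV(X) +_{Z(B)} \LV(Y)
    \xrightarrow{\ \id +_{\alpha_B}\id\ } \LV(X) +_{\LV(\Disc B)} \LV(Y)
    \xrightarrow{\ \cong\ } \LV(X +_{\Disc B} Y),
\end{equation*}
whose second leg is the isomorphism supplied by $\LV$ preserving the defining pushout (\cref{thm:lv-graph}); the unit comparison is $\alpha_A$. That these are genuine globular cells is automatic, as they restrict to identities on the outer feet $Z(A), Z(C)$.

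The substance is then to verify the coherence axioms of a lax double functor \cite[\S 3.5]{grandis2019}: the associativity and unit constraints for horizontal composition, and naturality of the comparison cells with respect to vertical composition. Each reduces, under the projections to state and parameter variables, to a diagram of pushouts in $\FinSet$ together with the copairing presentation of $\LV$ from \cref{thm:lv-graph}; the state-variable components are identities, so the content lies entirely in how the maps $\alpha_B$---which sum the parameters of identified vertices---distribute across iterated pushouts. I expect the associativity coherence to be the main obstacle: for a triple composite $X +_{\Disc B} Y +_{\Disc B'} W$ one must show that inserting $\alpha_B$ and $\alpha_{B'}$ at the two gluing sites yields the same morphism into $\LV(X +_{\Disc B} Y +_{\Disc B'} W)$ under either association, which follows from naturality of $\alpha$ and the universal property of the iterated pushout but requires careful tracking of the parameter coproducts and the horizontal associators of both double categories. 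The symmetric monoidal structure is comparatively routine: since $\LV$ preserves finite coproducts and $\LV(\Disc \emptyset) = Z(\emptyset)$ is the monoidal unit, the monoidal comparison cells are isomorphisms and their coherences descend from those of the coproduct. Finally, the signed statement $\LV: \Open{\FinSgnGraph} \to \Open{\ConParaDynam}$ is proved verbatim, substituting the conical analogues $\R_+$, $\Con$, and $\Dynam_+$ of \cref{thm:lv-sgn-graph}; the edge signs enter only through the scalar factors $X(\sgn)(e)$, which the structured-cospan machinery leaves untouched.
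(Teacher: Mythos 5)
Your proposal is correct and follows essentially the same route as the paper: both hinge on the natural transformation $\alpha\colon Z \To \LV \circ \Disc$ filling a square of finite-colimit-preserving functors, together with the identification $\pi_S \circ \LV = \ev_V$ (which the paper phrases via the mate $\bar\alpha = 1_{\ev_V}$) to see that the cospan legs are carried over unchanged. The only difference is one of packaging: where you propose to verify the lax double functor coherence axioms by hand from naturality of $\alpha$ and the universal properties of the iterated pushouts, the paper outsources exactly that verification to the lax extension of the structured-cospan functoriality theorem, \cite[Theorem 2.4]{patterson2023}.
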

\begin{proof}
  To construct the lax double functor, we use \cite[\mbox{Theorem
    2.4}]{patterson2023}, which extends the construction of a double functor in
  \cite[\mbox{Theorem 4.3}]{baez2020} from the pseudo to the lax case. The
  family of morphisms $\alpha_S: Z(S) \to \LV(\Disc S)$, $S \in \FinSet$, in the theorem
  statement assemble into a natural transformation
  \begin{equation*}
    \begin{tikzcd}
      \FinSet & \FinGraph \\
      \FinSet & \LinParaDynam
      \arrow["{L = \Disc}", from=1-1, to=1-2]
      \arrow["\LV", from=1-2, to=2-2]
      \arrow[Rightarrow, no head, from=1-1, to=2-1]
      \arrow["{L' = Z}"', from=2-1, to=2-2]
      \arrow["\alpha", shorten <=9pt, shorten >=9pt, Rightarrow, from=2-1, to=1-2]
    \end{tikzcd}.
  \end{equation*}
  The functors involved in this cell all preserve finite colimits: the top and
  bottom ones because they are left adjoints and the right one by
  \cref{thm:lv-graph}. By \cite[\mbox{Theorem 2.4}]{patterson2023}, we obtain a
  symmetric monoidal lax double functor
  \begin{equation*}
    \Open{\FinGraph} \cong {}_{L}\Csp(\FinGraph) \to
      {}_{L'}\Csp(\LinParaDynam) \cong \Open{\LinParaDynam}.
  \end{equation*}

  To see that this double functor is the same one in the theorem statement, we
  once again use the adjunctions to pass between $L$-structured and
  $R$-decorated cospans (recalling terminology introduced in the proof of
  \cref{prop:open-sgn-graphs}). Notice that the natural transformation $\alpha$ has
  as its mate \cite[\S 1]{cheng2014} the identity transformation
  $\bar\alpha = 1_{\ev_V}$:
  \begin{equation*}
    \begin{tikzcd}
      \FinSet & \FinGraph \\
      \FinSet & \LinParaDynam
      \arrow["{R = \ev_V}"', from=1-2, to=1-1]
      \arrow[Rightarrow, no head, from=1-1, to=2-1]
      \arrow["\LV", from=1-2, to=2-2]
      \arrow["{R' = \pi_S}", from=2-2, to=2-1]
      \arrow["\bar\alpha", shorten <=9pt, shorten >=9pt, Rightarrow, from=1-1, to=2-2]
    \end{tikzcd}.
  \end{equation*}
  Thus the action of the double functor $F \coloneqq \LV$ on $L$-structured
  cospans simplifies to the identity when translated to $R$-decorated cospans.
  \begin{equation*}
    \begin{tikzcd}[column sep=small]
      & {L(A_0)} & X & {L(A_1)} & \\
      {L'(A_0)} & {F(L(A_0))} & {F(X)} & {F(L(A_1))} & {L'(A_1)}\\
      & & \rotatebox[origin=c]{90}{$\leftrightsquigarrow$} & & \\
      & {A_0} & {R(X)} & {A_1} & \\
      {A_0} & {R(X)} & {R'(F(X))} & {R(X)} & {A_1}
      \arrow["{\ell_0}", from=1-2, to=1-3]
      \arrow["{\ell_1}"', from=1-4, to=1-3]
      \arrow["{\alpha_{A_0}}", from=2-1, to=2-2]
      \arrow["{F(\ell_0)}", from=2-2, to=2-3]
      \arrow[shorten <=3pt, shorten >=3pt, maps to, from=1-3, to=2-3]
      \arrow["{F(\ell_1)}"', from=2-4, to=2-3]
      \arrow["{\alpha_{A_1}}"', from=2-5, to=2-4]
      \arrow["{\bar\ell_0}", from=4-2, to=4-3]
      \arrow["{\bar\ell_1}"', from=4-4, to=4-3]
      \arrow["{\bar\ell_0}", from=5-1, to=5-2]
      \arrow["{\bar\alpha_X}", from=5-2, to=5-3]
      \arrow[shorten <=3pt, shorten >=3pt, Rightarrow, no head, from=4-3, to=5-3]
      \arrow["{\bar\alpha_X}"', from=5-4, to=5-3]
      \arrow["{\bar\ell_1}"', from=5-5, to=5-4]
    \end{tikzcd}
  \end{equation*}
  A similar statement holds for the action of the double functor on morphisms of
  $L$-structured and $R$-decorated cospans.
\end{proof}

\section{Conclusion}

\paragraph{Summary.}

Regulatory networks are a minimalistic but widely used tool to describe the
interactions between molecules in biochemical systems. We have made the first
functorial study of regulatory networks, formalized as signed graphs, and their
connections with other mathematical models in biochemistry. Among such models,
we have studied reaction networks, formalized as Petri nets with signed links,
and parameterized dynamical systems, focusing on Lotka-Volterra dynamics.

The major categories of this paper, and the functors between them, are
summarized in the following diagram, where ``LV'' is the Lotka-Volterra dynamics
functor (\S\ref{sec:lv}).
\begin{equation*}
  \begin{tikzcd}
    \SgnCat\ (\S\ref{sec:sgn-cat}) & \SgnGraph\ (\S\ref{sec:sgn-graph}) & \SgnPetri\ (\S\ref{sec:sgn-petri}) \\
    & \FinSgnGraph & \ConParaDynam\ (\S\ref{sec:para-dynam})
    \arrow[hook', from=2-2, to=1-2]
    \arrow[""{name=0, anchor=center, inner sep=0}, "\Path"', curve={height=18pt}, from=1-2, to=1-1]
    \arrow[""{name=1, anchor=center, inner sep=0}, "U"', curve={height=18pt}, from=1-1, to=1-2]
    \arrow["\LV", from=2-2, to=2-3]
    \arrow["\Int"', from=1-3, to=1-2]
    \arrow["\dashv"{anchor=center, rotate=-90}, draw=none, from=0, to=1]
  \end{tikzcd}
\end{equation*}
Most of the main results extend from closed systems to open systems, which
compose by gluing along their boundaries. Of the diagram above, we have extended
the following parts to double categories of open systems and double functors
between them. The Lotka-Volterra double functor is lax; the others are pseudo.
\begin{equation*}
  \begin{tikzcd}
    {\Open{\SgnCat}} & {\Open{\SgnGraph}} \\
    & {\Open{\FinSgnGraph}} & {\Open{\ConParaDynam}\ (\S\ref{sec:open-lv})}
    \arrow[hook', from=2-2, to=1-2]
    \arrow["\LV", from=2-2, to=2-3]
    \arrow["\Path"', from=1-2, to=1-1]
  \end{tikzcd}
\end{equation*}

\paragraph{Outlook.}

Of many possible directions for future work, we mention a few. As noted in the
introduction, Lotka-Volterra dynamics are only one of numerous dynamics that
could be considered as a canonical model for regulatory networks, and they are
not even among the most commonly studied in the biochemistry literature
\cite{tyson2019}. It would be desirable to have dynamics functors for regulatory
networks that draw on more flexible or more biologically plausible classes of
dynamical systems. In another direction, the two halves of this
paper---qualitative and quantitative---are not as tightly as integrated as one
might hope. How does the presence of a motif in a regulatory network, such as an
incoherent feedforward loop perhaps even of a specific type, manifest in the
continuous dynamics of that network? Put in category-theoretic terms, the
Lotka-Volterra dynamics functor is defined on signed graphs; how does it relate
to the freely generated signed categories in which motifs are expressed? These
intriguing questions are suggestive of ``feedback loop analysis'' in the field
of system dynamics \cite{richardson1995}, to which stronger connections should
be made.

This project fits into a broader program by applied category theorists and other
scientists that aims to systematize, in a completely precise way, the language
and methods of describing, comparing, and composing scientific models in
different domains. Within biology, the field of systems biology has advocated
for a holistic view of complex biological systems that emphasizes composition as
much as reduction. We believe that category theory has a role to play in this
endeavor by bringing mathematical precision to compositional and structural
aspects of modeling that are traditionally thought to be outside the realm of
mathematics.

\sloppy
\printbibliography[heading=bibintoc]

\end{document}